\documentclass[acmlarge]{acmart}

\AtBeginDocument{%
  }

\setcopyright{acmlicensed}
\copyrightyear{2025}
\acmYear{2025}
\acmDOI{XXXXXXX.XXXXXXX}

\acmJournal{TQC}

\usepackage{xcolor}
\definecolor{notered}{HTML}{ff0000}
\definecolor{RTBlue}{HTML}{3366ff}
\definecolor{MC}{HTML}{8B8000}

\definecolor{red}{HTML}{D62728}
\definecolor{blue}{HTML}{1F77B4}

\usepackage{adjustbox}
\usepackage{braket}
\usepackage{caption}
\usepackage{slashbox}
\usepackage{csvsimple}
\usepackage{anyfontsize}
\usepackage{multirow}
\usepackage{mathtools}
\newcommand{\defeq}{\vcentcolon=}

\usepackage{pgf,tikz}
\DeclareMathOperator*{\argmax}{arg\,max}

\usepackage{amsthm}
\newtheorem{lemma}{Lemma}

\usepackage{hyperref}
\begin{document}

\title[Warm-Start QAOA via a
Reduction to Max-Cut]{Solving General QUBOs with Warm-Start QAOA via a
Reduction to Max-Cut}

\author{Bikrant Bhattacharyya}
\affiliation{%
  \institution{California Institute of Technology}
  \country{USA}}
\email{bbhattac@caltech.edu}

\author{Michael Capriotti}
\affiliation{%
  \institution{Northwestern University}
  \country{USA}}
\email{michaelcapriotti2028@u.northwestern.edu}

\author{Reuben Tate}
\affiliation{%
  \institution{CC3: Information Sciences, Los Alamos National Laboratory}
  \country{USA}}
\email{rtate@lanl.gov}

\renewcommand{\shortauthors}{Bhattacharyya et al.}

\begin{abstract}
  The Quantum Approximate Optimization Algorithm (QAOA) is a quantum algorithm that finds approximate solutions to problems in combinatorial optimization, especially those that can be formulated as a Quadratic Unconstrained Binary Optimization (QUBO) problem. In prior work, researchers have considered various ways of "warm-starting" QAOA by constructing an initial quantum state using classically-obtained solutions or information; these warm-starts typically cause QAOA to yield better approximation ratios at much lower circuit depths. For the Max-Cut problem, one warm-start approaches constructs the initial state using the high-dimensional vectors that are output from an SDP relaxation of the corresponding Max-Cut problem. This work leverages these semidefinite warmstarts for a broader class of problem instances by using a standard reduction that transforms any QUBO instance into a Max-Cut instance. We empirically compare this approach to a "QUBO-relaxation" approach that relaxes the QUBO directly. Our results consider a variety of QUBO instances ranging from randomly generated QUBOs to QUBOs corresponding to specific problems such as the traveling salesman problem, maximum independent set, and portfolio optimization. We find that the best choice of warmstart approach is strongly dependent on the problem type.
\end{abstract}


\begin{CCSXML}
<ccs2012>
<concept>
<concept_id>10003752.10003809.10003716.10011138.10010042</concept_id>
<concept_desc>Theory of computation~Semidefinite programming</concept_desc>
<concept_significance>500</concept_significance>
</concept>
<concept>
<concept_id>10010583.10010786.10010813.10011726</concept_id>
<concept_desc>Hardware~Quantum computation</concept_desc>
<concept_significance>500</concept_significance>
</concept>
</ccs2012>
\end{CCSXML}

\ccsdesc[500]{Theory of computation~Semidefinite programming}
\ccsdesc[500]{Hardware~Quantum computation}
\keywords{QAOA, approximation algorithms, Max-Cut, warm-starts}


\maketitle

\section{Introduction}
The Quantum Appoximate Optimization Algorithm (QAOA) is a hybrid quantum-classical optimization algorithm developed by Farhi et al. \cite{farhi2014quantumapproximateoptimizationalgorithm} designed to solve combinatorial optimization problems. This algorithm uses a classical optimization loop to fine-tune the parameters of a quantum circuit. The ansatz for QAOA circuits is based off of a Trotterization of the adiabatic quantum algorithm \cite{farhi2014quantumapproximateoptimizationalgorithm, Wurtz_2022}. \\\\
Many NP-hard problems can be formulated as quadratic optimization problems over discrete variables \cite{Lucas_2014}; such quadratic formulations are well-suited for QAOA (and quantum algorithms in general) due to the natural correspondence between the quadratic terms and native two-qubit gates that exist on most quantum devices.
\\\\
QAOA falls under the broader class of quantum algorithms called Variational Quantum Algorithms \cite{Wang2021}, which all use this hybrid quantum-classical optimization loop to minimize a cost function evaluated on a quantum circuit. 
\\\\
For current and near-term quantum devices, it is important to be mindful of the circuit depth since an increase in circuit depth and gate count as the former will causing to an increase in total runtime leading to increased decoherence and the latter will cause an increase in the compounded gate errors. Tate et al. \cite{Tate_2023,tate2022bridgingclassicalquantumsdp} developed a ``warm-start" heuristic that modifies the initial quantum state of QAOA from $\ket{+}^{\otimes n}$ (the uniform superposition of all bitstrings) to warm-start initial state that is designed to be biased towards better solutions. The idea is that with an improved initial state, fewer QAOA layers (and hence fewer gates and lower circuit depths) are needed to transform the initial state into a quantum state whose measurement yields good solutions, and indeed, this was what has been empirically observed. 
\\\\
In particular, when solving Max-Cut problems, one can relax the problem into a semidefinite program (SDP) as is done in the Goemans-Williamson algorithm \cite{10.1145/227683.227684}; the SDP is a convex problem that can be solved efficiently and Tate et al. uses the solutions from this SDP relaxation (which are high-dimensional vectors) to construct the warm-start initial quantum state.\cite{tate2022bridgingclassicalquantumsdp}.
\\\\
Around the same time that Tate et al. \cite{tate2022bridgingclassicalquantumsdp} developed their warm-start approach, Egger et al. \cite{Egger_2021} had independently developed their own warm-start approaches. For any given QUBO, one of their approaches considers a simple QUBO-relaxation that relaxes the integer constraints $x_i \in \{0,1\}$ of the QUBO to an interval constarint $x_i \in [0,1]$; the optimal solutions from the QUBO-relxation are then used to generate a warm-started quantum state. If the matrix defining the QUBO satisfies certain criteria, the relaxation is convex and can thus be solved efficiently; however, for general QUBOs, solving this relaxation is known to be NP-Hard \cite{pardalos1991quadratic}. For an arbitrary QUBO, one possible approach is to perform the above QUBO-relaxation and to find an \emph{approximate} optimal solution (of the relaxation) via local optimization.
\\\\
Alternatively, this work proposes a new approach: a mapping that takes arbitrary QUBOs and maps them to equivalent Max-Cut instances. Using the warm-start approach of Tate et al. \cite{tate2022bridgingclassicalquantumsdp,Tate_2023}, these Max-Cut problems can be solved to obtain solutions that can then be mapped back to solutions for the original QUBO.
\\\\
This ``QUBO to Max-Cut'' mapping introduces an additional auxiliary qubit/variable. In Tate et al.'s approach, a global rotation is typically applied to the solution of the SDP relaxation before mapping it to a warm-started initial quantum state. They propose a vertex-at-top heuristic where the global rotation rotates one of the qubits so that it is on top of the Bloch sphere in the warm-start quantum state. However, some choices of vertices are better than others, and hence, an $O(n)$ overhead (where $n$ is the number of vertices) is required to test all possible vertex-at-top rotations. Interestingly, when mapping a QUBO to a Max-Cut instance, we find that the auxiliary qubit is frequently the best choice of qubit for the vertex-at-top heuristic; this observation can be used to reduce the overhead of trying multiple random vertices in the vertex-at-top heuristic.
\\\\
We empirically compare the two warm-start approaches above for a variety of different types of QUBOs up to 16 variables. The QUBOs we test come from randomly generated matrices or QUBOs that arise from specific problems in combinatorial optimization (e.g. portfolio optimization, traveling salesman, and maximum independent set). We find that the best warm-start approach is highly problem-dependent and also dependent on the metric of success (e.g. approximation ratio vs probability of finding the optimal solution).

\section{Background}
\subsection{Quadratic Combinatorial Optimization Problems}

Here we only consider two types of (quadratic) problems.
Given a symmetric matrix $Q\in \mathbb{R}^{n\times n}$ the associated \textit{Quadratic Unconstrained Binary Optimzation} (QUBO) problem is \cite{lewis2017quadraticunconstrainedbinaryoptimization,glover2019tutorialformulatingusingqubo}
\begin{equation}
    \max_{x\in\{0,1\}^n}x^T Q x
\end{equation}
Notice that this structure allows for linear terms as well because for $x\in\{0,1\}^n$, $x_i^2 = x_i$ for all $i=1,2,\dots,n$, and hence,
\begin{equation}
x^T Q x + \mu ^ T x = x^T(Q+\textrm{diag}(\mu))x.
\end{equation}
Likewise for a symmetric matrix $J\in \mathbb{R}^{n\times n}$ the corresponding \textit{Ising Problem} is
\begin{equation}\label{Ising Cost}
    \max_{y\in\{-1,1\}^n}y^T J y.
\end{equation}
With this definition, Ising problems, unlike QUBOs, do not allow for linear terms.
\\\\
Ising problems are closely related to the Ising model studied in physics, which is a statistical model for spin couplings in ferromagnetic materials \cite{Cipra1987AnIT,Shankar_2017}.\footnote{In the context of physical systems, the addition of a linear term would correspond to the introduction of an external magnetic field.}\\
\noindent
Here we define a well-studied combinatorial optimization problem: (weighted) Max-Cut \cite{10.1145/502090.502098,doi:10.1137/S0097539797328847,1366224} where the objective is to partition the vertices of a weighted graph into two disjoint groups so that the sum of the weights of the edges between the groups is maximized. More formally, given a graph $G=(V,E)$ with weighted adjacency matrix $A$, Max-Cut can be written as the following maximization problem:
\begin{equation}
    \max_{y \in \{-1,+1\}^n} C(y),
\end{equation}
with,
\begin{equation}
    C(y)=\frac{1}{4}\sum_{1\leq i,j \leq n}A(1-y_iy_j),
\end{equation}
where the elements of $\{-1,1\}^{|V|}$ are referred to as \textit{cuts} with corresponding \textit{cut-value} of $C(y)$.


This differs by a constant value from the Ising problem associated with $-A/4$,
\begin{equation}
    C(y)=y^T\left(-\frac{1}{4}A\right)y+\frac{1}{4}\textrm{sum}(A).
\end{equation}
We call the Ising problem associated with $-A/4$ the \textit{Max-Cut} problem for graph $G$.
\\\\
The (instance-specific) cut ratio of $y$ is given by 
$$\alpha(y)=\frac{C(y)-\min_{y\in\{-1,1\}^n} C(y)}{\max_{y\in\{-1,1\}^n} C(y)-\min_{y\in\{-1,1\}^n} C(y)}.$$

\subsubsection{From QUBOs to Max-Cut Problems}
Given the QUBO associated with matrix $Q\in \mathbb{R}^{n\times n}$, the $n+1$ adjacency matrix $A$ given by 
\begin{equation}
\begin{aligned}
A_{i,j}&=-Q_{i,j}\\
A_{n,i}&=\sum_{1\leq j \leq n}Q_{i,j}\\
A_{n,n}&=0,
\end{aligned}
\end{equation}
has the property that if $x_i=\frac{1}{2}(1-y_iy_n)$,
\begin{equation}\label{QUBO_to_MAXCUT}
y^T\left(-\frac{1}{4}A\right)y+\frac{1}{4}\textrm{sum}(Q) = x^TQx.
\end{equation}
A derivation of this mapping can be found in Appendix \ref{Map Deriv}. 

\
The graph with adjacency matrix $A$ is called the \textit{corresponding graph} of $Q$. The corresponding graph allows for QUBOs to be mapped to Max-Cut problems. A similar mapping was mentioned in \cite{Dunning2018WhatWB}.

\subsubsection{Max-Cut Relaxations}
The benefit of mapping QUBOs to Max-Cut problems is that we can make use of well-known relaxations.
\\\\
Relaxing the Max-Cut cost function (\ref{Ising Cost}, with $J=-A/4$) by introducing $Y_i\in \mathbb{R}^k$ with $\|Y_i\|=1$ is given by
\begin{equation}
-\frac{1}{4}\sum_{1\le i,j\leq n}^{ }A_{i,j}y_{i}y_{j}\to-\frac{1}{4}\sum_{1\le i,j\leq n}^{ }A_{i,j}\left(Y^{T}Y\right)_{i,j},
\end{equation}
where $Y$ is a matrix with columns $Y_i$.
The Goemans Williamson GW relaxation \cite{10.1145/227683.227684} is derived from the special case where $k=n$. In this case, $Y^TY$ is a positive semidefinite symmetric matrix with unit diagonal. If we replace the matrix $Y^TY$ with a general matrix $M$, then the relaxed QUBO is 
$$-\frac{1}{4}\max_{M\in \mathbb{S}^n} \operatorname{tr}(A M),$$
where $\mathbb{S}^n$ is the set of positive semidefinite symmetric matricies with unit diagonals.
\\\\
Notably, this optimization problem is QUBO-Relaxed, and can thus be exactly solved \cite{Nesterov1994InteriorpointPA}. Given a solution $M$, we find the Cholesky decomposition $M=Y^TY$. 
If we let $k=2$ or $k=3$, then we obtain the Burer-Monteiro BM relaxations, $\textrm{BM}_2$ and $\textrm{BM}_3$ respectively \cite{Burer2003ANP}. 
\\\\
In the case of the $\textrm{BM}_2$ relaxation, we can express each column vector $Y_i$ in polar coordinates, i.e., $Y$ can be parameterized it in terms of $\theta \in [0,2\pi)^n$, giving
\begin{equation}
    Y_i = \left[\cos(\theta_i), \sin(\theta_i) \right]^T.
\end{equation}
Likewise for $\textrm{BM}_3$ relaxation, each $Y_i$ can be expressed in spherical coordinates, i.e., we can take $Y$ and parameterize it in terms of $\theta\in[0,\pi]^n,\phi = [0,2\pi)^n$, giving
\begin{equation}
    Y_i = \left[\sin(\theta_i)\cos(\phi_i), \sin(\theta_i)\sin(\phi_i),\cos(\theta_i) \right]^T.
\end{equation}
Unlike the GW relaxation, solving for $Y$ is no longer a QUBO-Relaxed problem. Instead we use randomized stochastic perturbations with (step size $\eta$) to solve this problem as in \cite{tate2022bridgingclassicalquantumsdp,Tate_2023}.
\subsection{QAOA}
Given a depth $p$, mixing Hamiltonian $H_B$, and a cost Hamiltonian $H_C$, the Quantum Approximate Optimization Algorithm (QAOA) (see \cite{farhi2014quantumapproximateoptimizationalgorithm}) produces 
a state $|\psi(\beta,\gamma)\rangle$ parametrized by $\beta,\gamma\in \mathbb{R}^p$ as
\begin{equation}\label{QAOA State}
   |\psi\rangle = \prod_{m=1}^pe^{-i\beta_m H_B}e^{-i\gamma_m H_C}|\psi_\textrm{init}\rangle.
\end{equation}
Here, the initial state $|\psi_\textrm{init}\rangle$ is always taken to be a separable state, i.e. $|\psi_\textrm{init}\rangle$ can be expressed as
\begin{equation}
|\psi_\textrm{init}\rangle=\bigotimes_{j=1}^n|\psi_j\rangle,
\end{equation}
where $n$ is the number of qubits and each $|\psi_j\rangle$ is a single qubit state.
\\\\
Following the convention of ``QAOA-warmest'' from \cite{Tate_2023}, we choose $H_B$ to be a seperable Hamiltonian where $|\psi_\textrm{init}\rangle$ is the maximum eigenvalue eigentstate of $H_B$. This can be constructed as
\begin{equation}
    H_B=\bigoplus_{j=0}^{N-1}(x_jX+y_jY+z_jZ),
\end{equation}
where $(x_j,y_j,z_j)$ are the bloch sphere coordinates for the (single qubit) state $|\psi_j\rangle$.
\\\\
The goal of QAOA is to estimate the maximimum energy eigenstate of $H_C$ by solving for
$\beta^*,\gamma*$, where
\begin{equation}
    \beta^*,\gamma^* = \argmax_{\beta,\gamma\in \mathbb{R}^p} \langle \psi(\beta,\gamma) | H_C | \psi(\beta,\gamma)\rangle.
\end{equation}
\noindent
Assuming that the depth $p$ QAOA ansatz is sufficiently expressive, the maximal eigenstate of $H_C$ will approximately be 
$|\psi(\beta^*,\gamma^*)\rangle$.
\\\\
In fact, \cite{Tate_2023} showed that\footnote{The theorems by \cite{Tate_2023} specifically prove this limit in the case that $H_C$ is the Max-Cut cost Hamiltonian; however, by going through the proofs, it is clear that the argument holds for any diagonal cost Hamiltonian (e.g. cost Hamiltonians that arise from classical combinatorial optimization problems.) }, just like standard QAOA, if
$H_B$ is constructed as above (where $E_\textrm{max}$ is the maximum eigenvalue of $H_C$), the following  result holds, 
\begin{equation}\label{Infintie Depth Garuntee}
\lim_{p\to\infty} \langle\psi(\beta^*,\gamma^*)|H|\psi(\beta^*,\gamma^*)\rangle=E_\textrm{max},
\end{equation}
as long as none of the qubits in $\ket{\psi_\text{init}}$ lie at the poles of the Bloch sphere.
\subsubsection{Encoding QUBOs and Max-Cut Problems}
Given $x\in\{0,1\}^n,y\in\{-1,1\}^n$ and symmetric $Q,J\in\mathbb{R}^{n\times n}$, we can construct the $n$-qubit states
\begin{equation}
    \begin{aligned}
    |\psi(x)\rangle &= \bigotimes_{m=0}^{n-1}|x_i\rangle,\\
    |\psi(y)\rangle &= \bigotimes_{m=0}^{n-1}\left|\frac{y_i+1}{2}\right\rangle.
    \end{aligned}
\end{equation}
We can also define Hamiltonians\footnote{Here $Q$ and $M$ refer to QUBO and Max-Cut respectively.} $H_Q^{\textrm{QUBO}}$ and $H_J^{\textrm{Max-Cut}}$ as (where $Z_i$ is the Pauli $Z$ applied to qubit $I$)
\begin{equation}
    \begin{aligned}
        H_Q^{\textrm{Q}}& = \sum_{0\leq i,j < n}Q_{i,j}\left(\frac{1-Z_i}{2}\right)\left(\frac{1-Z_i}{2}\right),\\
        H_A^{\textrm{M}}&= -\frac{1}{4}\sum_{0\leq i,j < n}A_{i,j}Z_iZ_j.\\
    \end{aligned}
\end{equation}
Which in turn satisfy
\begin{equation}
    \begin{aligned}
        \langle \psi(x) | H_Q^{\textrm{Q}}| \psi(x)\rangle &= x^TQx,\\
        \langle \psi(y) | H_A^{\textrm{M}}| \psi(y)\rangle &= -\frac{1}{4}y^TAy.
    \end{aligned}
\end{equation}

Now, given the QUBO problem associated with $Q$, it can be solved by either
\begin{itemize}
    \item Running QAOA with $H_C = H^{\textrm{Q}}_Q.$
    \item Running QAOA with $H_C = H^{\textrm{M}}_A$ where $A$ is the adjacency matrix of the corresponding graph of $Q$.
\end{itemize}

\section{Methods}

\subsection{Warm-start Quantum Optimization}
Warm-start QAOA refers to using information about the Hamiltonian $H_C$ to determine an appropriate initial state, $|\psi_\textrm{init}\rangle$.
\subsubsection{QUBO-Relaxed}
The QUBO-Relaxed warmstart (see \cite{Egger_2021,Tate_2023}) is inspired by the following relaxation of a QUBO. Given a symmetric $Q\in \mathbb{R}^{n\times n}$, let
\begin{equation}
    y^c = \argmax_{y\in [0,1]^n} y^TQy.
\end{equation}
\noindent
In the event that $Q$ is negative semidefinite, it is known that the above relaxation is convex\footnote{Technically, the objective is a \emph{concave} function; however convex optimization is usually considered in the context of minimization problems and minimization of a convex objective $f$ is equivalent to maximization of the concave objective $-f$. In general, we abuse this language and say that an optimization problem ``is convex" whenever a convex objective is being minimized or when a concave objective is being maximized.} and thus can be solved efficiently \cite{gondzio2006solving}. However, for arbitrary $Q$, solving the relaxation is known to be NP-Hard \cite{pardalos1991quadratic} meaning that, unless $P=NP$, there is no algorithm that is guaranteed to find the global optimum in polynomial time. As discussed later in our numerical results, for general $Q$, we instead \emph{estimate} $y^c$ by considering random initial points in the box $[0,1]^n$ and performing a local optimization.

Given a positive real hyperparameter $\varepsilon>0$, the QUBO-Relaxed initial state for $H_Q^{\textrm{QUBO}}$ is given by
\begin{equation}
    \begin{aligned}
        |\psi_j\rangle &= R_Y(\theta_j)|0\rangle \quad \textrm{where}\\
        \theta_j &= \begin{cases}
            2\arcsin(\varepsilon) &\textrm{if}\quad y^c_j \leq \varepsilon \\ 
            2\arcsin(y^c_j) &\textrm{ if} \quad\varepsilon \leq y^c_j \leq 1-\varepsilon \\
            2\arcsin(1-\varepsilon) &\textrm{if}\quad y^c_j \geq 1-\varepsilon \\
        \end{cases}.
    \end{aligned}
\end{equation}
\subsubsection{Bloch Sphere Encoding}
Given a mapping with $k=2$ and a corresponding vector of angles $\theta\in [0,2\pi)$, the corresponding initial state for $H^\textrm{Ising}_J$ is
\begin{equation}
    |\psi_j\rangle =  \cos\left(\frac{\theta_j}{2}\right)|0\rangle +  e^{-i\pi/2}\sin\left(\frac{\theta_j}{2}\right)\ket{1}.
\end{equation}
And, given a mapping with $k=3$ and a corresponding vector of angles $\theta\in [0,\pi],\phi\in[0,2\pi)^n$, the corresponding initial state for $H^\textrm{Ising}_J$ is
\begin{equation}
    |\psi_j\rangle =  \cos\left(\frac{\theta_j}{2}\right)|0\rangle +  e^{i\phi_j}\sin\left(\frac{\theta_j}{2}\right)\ket{1}.
\end{equation}
We also use ``vertex-at-top'' rotations defined as follows (see \cite{tate2022bridgingclassicalquantumsdp,Tate_2023})
\begin{itemize}
    \item If $k=2$, a vertex-at-top rotation on qubit $j$ applies the transformation $\theta\to\theta-\theta_j.$
    \item If $k=3$, a vertex-at-top rotation on qubit $j$ applies a clockwise $z$-rotation by $\phi_j$, followed by a clockwise $y$-rotation by $\theta_j$, followed by a random $z$-rotation.
\end{itemize}
It can be easily verified that applying a vertex-at-top rotation to qubit $j$ modifies the angles such that $|\psi_j\rangle=|0\rangle$.
\subsubsection{Dimension \texorpdfstring{$k$}{TEXT} Goemans Williamson (\texorpdfstring{$\textrm{GW}_k$}{TEXT})}
While it's clear how to construct initial states from the $\textrm{BM}_k$ relaxations, constructing initial states from the GW relaxation requires an additional step.
\\\\
Following \cite{Tate_2023}, for $k=2,3$ we sample random orthonormal bases $\{x_i\}$ of $\mathbb{R}^n$ and keep the first $k$. We then project each vector $Y_i \in \mathbb{R}^n$ onto the subspace generated by the basis $\{x_i\}$, i.e., we define $\tilde Y_i\in \mathbb{R}^k$ with components
\begin{equation}
    (\tilde Y_i)_j=\frac{x_j^TY_i}{\sqrt{\sum_{j=0}^{k-1}(x_j^TY_i)^2}}.
\end{equation}
This then gives a sequence of $n$ $k$-dimensional vectors as desired (or alternatively $\tilde Y\in\mathbb{R}^{k\times n}$).
\\\\
Because the bases $\{x_i\}$ are randomly sampled, there's no guarantee that a given one will sufficiently recover the structure of $\{Y_i\}$, we randomly generate a fixed number of bases and choose the one that maximizes $\textrm{tr}(J^T, \tilde Y^T\tilde Y)$.
\subsection{Performance Metrics}
To do so, we evaluate two metrics. The first is inspired by the cut ratio 
\begin{equation}
    \begin{aligned}
        \alpha^\textrm{Q} &= \frac{\langle \psi^{\textrm{Q}}|H^\textrm{Q}_Q|\psi^{\textrm{Q}}\rangle-E^\textrm{Q}_\textrm{min}}{E^\textrm{Q}_\textrm{max}-E^\textrm{Q}_\textrm{min}},\\
        \alpha^\textrm{M} &= \frac{\langle \psi^{\textrm{M}}|H^\textrm{M}_A|\psi^{\textrm{M}}\rangle-E^\textrm{M}_\textrm{min}}{E^\textrm{M}_\textrm{max}-E^\textrm{M}_\textrm{min}}.
    \end{aligned}
\end{equation}
Notice that this comparison removes the affect of the constant difference in (\ref{QUBO_to_MAXCUT}).
\\\\
The second is the optimal sampling probability,
\begin{equation}
    \hspace{-1cm}
    \begin{aligned}
        \mathcal{P}^\textrm{Q} &= \sum_{|\phi^\textrm{Q}\rangle}|\langle \phi^\textrm{Q} | \psi^\textrm{Q}\rangle|^2 \\ & \textrm{where} \quad \langle \phi^\textrm{Q} | H^\textrm{Q}_A | \phi^\textrm{Q}\rangle = E_\textrm{max}^\textrm{Q},\\
        \mathcal{P}^\textrm{M} &= \sum_{|\phi^\textrm{M}\rangle}|\langle \phi^\textrm{M} | \psi^\textrm{M}\rangle|^2 \\ & \textrm{where} \quad \langle \phi^\textrm{M} | H^\textrm{M}_A | \phi^\textrm{M}\rangle = E^\textrm{M}_\textrm{max}.
    \end{aligned}
\end{equation}
\\\\
Because the superscript on $\alpha,\mathcal{P}$ will be made clear by the warmstart in consideration, it will be dropped from here on.\footnote{Obtaining the optimal solution to either the QUBO or Ising problem by measuring the QAOA output requires sufficiently large $\mathcal{P}$. With that being said, there is no way to directly optimize $\mathcal{P}$ without knowing the optimal solution ahead of time, so QAOA optimizes the expectation directly instead, which is turn optimizes for the approximation ratio $\alpha$.}
\\\\
Because $\alpha,\mathcal{P}\in[0,1]$ by construction, they serve as a comparable performance metric for the different warmstarts applied to either $H^\textrm{Q}_Q$ or $H^\textrm{M}_A$.\footnote{The approximation ratio $\alpha$ is not evenly distributed over states. See Appendix~\ref{alphadistributions}} In principle, one could compute $(\alpha,\mathcal{P})$ for the Max-Cut case first, or first project down onto the QUBO space and then compute $(\alpha,\mathcal{P})$. Because of how these metrics are defined, either procedure for computing the metrics gives the same result, allowing for a fair comparison.
\subsection{Example Problems Instances}\label{setup}
\begin{figure*}
    \hspace*{-2cm}\includegraphics[width=1.2\linewidth]{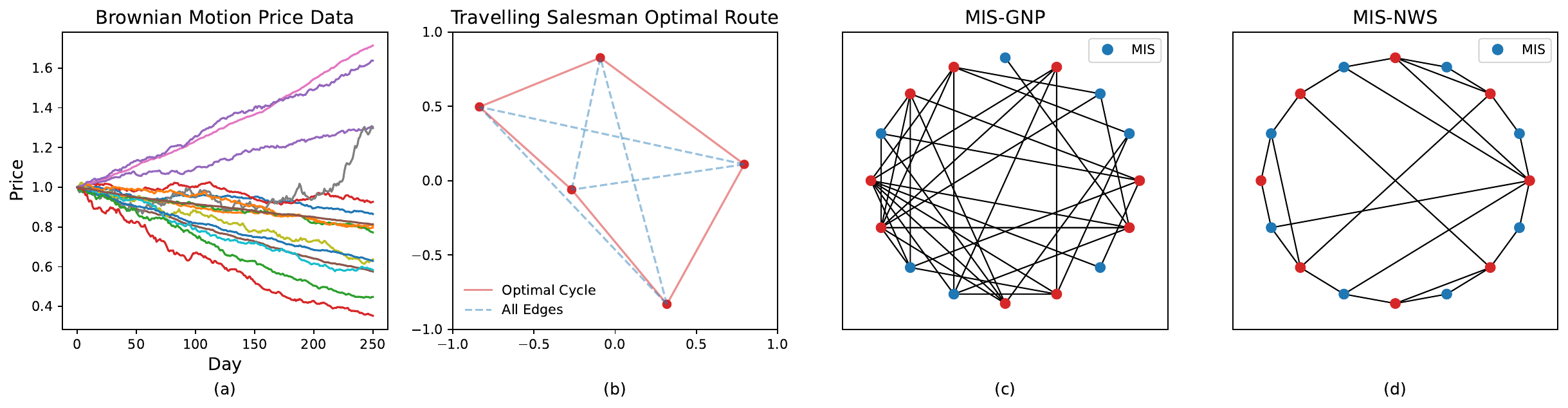}
    \caption{Visualization of example problem instances. From left to right, (a) geometric Brownian prices versus timestep (days), (b) an example Travelling Salesman optimal route, (c) an example Erdős–Rényi graph (max independent set in blue), (d) an example Newman–Watts–Strogatz graph (max independent set in blue)}
    \label{ProVis}
\end{figure*}

Representative quadratic optimization problems were utilized for benchmarking, with most being non-QUBO-Relaxed. These example problems include
\begin{itemize}
    \item Random QUBOS. These are QUBOs where the matrix elements of $Q$ are chosen either uniformly from $[-1,1]$ or discretely from $\{-1,1\}$.
    \item Traveling Salesman Problem (TSP). This problem asks for the route between $5$ points that are randomly placed onto $[-1,1]^2$ which visits each city exactly once while minimizing the distance.
    \item Portfolio Optimization. This problem begins by simulating a set of stock prices over a fixed time duration, and then computing the covariance and mean return of each asset. The optimization then seeks to find the best combination of stocks to buy to simultaneously maximize the return, minimize the risk, and satisfy a predetermined budget constraint. 
    \item Maximum Independent Set (MIS). This problem begins with an unweighted, connected graph, and then aims to determine the largest set of verticies $S$ such that no two elements of $S$ share an edge. We study this problem on two different graph ensembles, the Erdős–Rényi (GNP) model and the Newmann-Watts-Strogatz (NWS) model.
\end{itemize}
Visualizations of these problems are provided in Figure~\ref{ProVis}. A more detailed background of each problems is provided in Appendix~\ref{EXProblem}.
\\\\
To gauge the impact of vertex-at-top rotations, for the $H^\textrm{Adj.}_A$ warmstarts, we tested rotations on the first qubit, the last qubit, and none of the qubits. The choice of rotating the first qubit was chosen arbitrarily: since these problem instances are random, the distribution of the $Q$ matrix elements (of the corresponding QUBO) should be invariant under permutations of rows/columns. However, after transforming the problem into a Max-Cut instance, the last qubit will have a different structure due to it's special role in the mapping in (\ref{QUBO_to_MAXCUT}). The depth-$0$ data in Appendix~\ref{Depth0Data} provides empirical support for this.
\\\\
For all of the following problems, $1000$ instances were generated and ran at depth $p=0$ (see Appendix~\ref{Depth0Data}) and $10$ of those $1000$ were selected to be run at depths $1\leq p \leq 5$. Each problem is a QUBO in $16$ variables with a corresponding $17$-vertex corresponding graph.

\section{Results}
Experimental parameters can be found in Appendix~\ref{Exparams}.
\\\\
For $p=0,1,\ldots,5$, we present the data for the $\textrm{GW}_{2}$ warmstart in Figure \ref{GW2_All}. Full Depth Data for $\textrm{GW}_{3}$, $\textrm{BM}_{2}$, and $\textrm{BM}_{3}$ is provided in Figure~\ref{Full_Comparison_All}. $\textrm{GW}_{2}$ was selected as the representative warmstart because it obtained instance-specific approximation ratios and optimal sampling probabilities which were either comparable to or better than the other warmstarts considered.
\\\\
While both $\textrm{BM}_2$ and $\textrm{BM}_3$ perform comparable to the $\textrm{GW}$ warmstarts in terms of their obtained instance-specific approximation ratio, they obtain significantly lower optimal sampling probabilities. This is likely due to the warmstarts finding relatively high cost solutions, but not necessarily optimal ones.
\\\\
Because the relaxation corresponding to the QUBO-Relaxed warmstart is not convex, there are no guarantees on its performance. We use a LFBBGs optimizer to estimate $y^c$, and vary the number of random initial conditions.\footnote{Portfolio Optimization is actually a QUBO-Relaxed problem when relaxed, so we can solve for $y^c$ directly using QUBO-Relaxed programming. Thus, the number of random initializations does not matter for Portfolio Optimization.} To give a fair comparison between this warmstart and the others, we compare its performance with both $10$ random initializations and $50$ random initializations. As can be seen in the corresponding figures, there is a notable difference in performance between $50$ and $10$ initializations for certain problems.
\\\\
The warmstarts were run with different vertex-at-top rotations, specifically first, last, and no rotation, as discussed in the previous section. A more in depth explanation is provided in Appendix~\ref{Depth0Data}.
\subsection{Aggregate Results}
All trends discussed can be seen in in Figure ~\ref{GW2_All}. The following results are all referring to the average values and standard deviations at $p=5$.
\\\\
We use the results of our experiment to better understand the performance of
\begin{itemize}
    \item \textbf{(1)} GW$_{2}$ based on vertex-at-top rotation choice.
    \item \textbf{(2)} QUBO-Relaxed based on the number of initializations.
    \item \textbf{(3)} GW$_{2}$ Last rotation compared to that of QUBO-Relaxed.
\end{itemize}
\textbf{(1)} 
For both metrics, the vertex on top rotation choices ranked in order of decreasing performance (for all problems except TSP) are last, first, and no rotation. The obtained optimal sampling probabilities/instance-specific approximation ratios for the last and first vertex on top rotations are often within $\pm 0.25$ standard deviations, whereas the obtained optimal sampling probabilities/instance-specific approximation ratios for the first vertex on top rotation and no rotation are typically not within $\pm 0.25$. 
\\\\
These results provide further empirical support that when mapping QUBOs to Max-Cut problems, the additional degree of freedom added has a fundamentally different structure than the other variables, while also supporting the findings in \cite{tate2022bridgingclassicalquantumsdp} where it was shown that vertex-on-top rotations in general provide improved warmstarts. 
\\\\
\textbf{(2)} In general, QUBO-Relaxed with 50 initializations slightly outperforms QUBO-Relaxed with 10 initializations. This is to be expected, because the performance of the non-QUBO-Relaxed optimization is strongly dependent on the number of random initial conditions tested. 
\\\\
Note that Portfolio Optimization is a special case of this warmstart, because the matrix of QUBO coefficients is itself negative semidefinite. As a result, solving the continous relaxation is a QUBO-Relaxed programming problem and can be solved efficiently. Thus, the only difference in obtained optimal sampling probabilities or instance-specific approximation ratios between different optimization runs is due to differences in the QAOA parameter optimization, not the number of initializations.
\\\\
 At depth $p=5$, both random QUBO problem types have average sampling probabilities within 0.25 for 50/10 random initializations.  Interestingly, the difference in obtained optimal sampling probability between 50 and 10 initializations is largest for TSP and the MIS problems, both of which arise from constrained optimization problems. This suggests that constrained optimization problems are intrinsically more difficult to solve, and this structural difference is preserved when these problems are mapped to QAOA.
\\\\
50 and 10 initializations are within 0.25 standard deviations of each other for all problems.
\\\\
\textbf{(3)} In order give a fair comparison between the various GW$_{2}$ vertex on-top rotation choices and different initializations for QUBO-Relaxed, we categorize each vertex on top rotation choice in terms of how it relates to the QUBO-Relaxed warmstart performance (in terms of average values): 
\begin{itemize}
    \item[\textbf{($\uparrow$)}] better than both QUBO-Relaxed with 50 initializations.
    \item[\textbf{\makebox[12pt]{$\hspace{0pt}(\leftrightarrow$)}}] \hspace{2pt}between QUBO-Relaxed with 10 initializations and 50 initializations 
    \item[\textbf{($\downarrow$)}] worse than QUBO-Relaxed with 10 initializations
\end{itemize} 
\begin{figure}[h!]
    \centering
    \resizebox{0.5\textwidth}{!}{
        \begin{tabular}{|p{0.25cm}||p{1.5cm}|p{1.5cm}|p{1.5cm}|p{1.5cm}|} 
        \hline
        & \textbf{\makebox[1.5cm]{\centering Rand.}} & \textbf{\makebox[1.5cm]{\centering TSP}} & \textbf{\makebox[1.5cm]{\centering PO}} & \textbf{\makebox[1.5cm]{\centering MIS}} \\
        \hline
        $\mathcal{P}$ & \textbf{\makebox[1.5cm]{\centering $\downarrow$}} & \textbf{\makebox[1.5cm]{\centering $\uparrow$}} & \textbf{\makebox[1.5cm]{\centering $\uparrow$}} & \textbf{\makebox[1.5cm]{\centering $\leftrightarrow$}} \\ 
        \hline
        $\alpha$ & \textbf{\makebox[1.5cm]{\centering $\uparrow$}} & \textbf{\makebox[1.5cm]{\centering $\downarrow$}} & \textbf{\makebox[1.5cm]{\centering $\uparrow$}} & \textbf{\makebox[1.5cm]{\centering $\leftrightarrow$}} \\
        \hline
        \end{tabular}
    }

    \caption{Classification of best performing vertex on top-rotation $\textrm{GW}_2$ relative to QUBO-relaxed warmstart. For every problem/metric, the last vertex on top rotation was used for comparison except TSP optimal sampling probability where no rotation was used.}
    \label{tab:my_label}
\end{figure}
\noindent
\vspace{-0.5cm}\\
\subsubsection{Random QUBOs}\hfill

\noindent
For both random QUBOs the optimal sampling probability falls under \textbf{($\downarrow$)}, while the instance-specific approximation ratio falls under \textbf{($\uparrow$)}.\\\\
For the instance-specific approximation ratio, $\textrm{GW}_2$ outperforms QUBO-Relaxed with both 50 and 10 initializations, and is above $0.25$ standard deviations for discrete random QUBOs while being within $0.25$ standard deviations for continous random QUBOS. 
\\\\ 
For the optimal sampling probability, QUBO-Relaxed with 10 initializations and 50 initializations are within $0.25$ standard deviations of each other for both continuous and discrete random QUBOs. QUBO-Relaxed with 10 initializations outperforms $\textrm{GW}_2$ in both instances, but it is within $0.25$ of $\textrm{GW}_2$ for continuous random QUBOs whereas it is not for discrete random QUBOs.
\\\\
The difference between these metrics is likely due to $\textrm{GW}_2$ creating initial states that are superpositions of the optimal state and other high-cost but suboptimal states, which carries through subsequent optimizations. 
\\\\
\vspace{-1cm}\\
\subsubsection{Traveling Salesman}\hfill

\noindent
For TSP, the optimal sampling probability falls under \textbf{($\uparrow$)}, while the instance-specific approximation ratio falls under \textbf{($\downarrow$)}.
\\\\
The optimal sampling probability obtained by both GW$_2$ as well as QUBO-Relaxed are relativley low when compared to other problems. This is likely due to the constraints in the cost function being significantly larger (and contributing more terms to the QUBO matrix). For our case of $5$ cities, there are $2^{16}$ possible QUBO bitstrings but only $24$ feasible solutions.
\\\\
With that being said, for the optimal sampling probability, the warmstarts in decreasing order are, GW$_{2}$ no rotation, QUBO-Relaxed with 10 initializations, QUBO-Relaxed with 50. None of these are within $0.25$ standard deviations of each other. Although QUBO-Relaxed performing better with fewer initializations is surprising, it is likely an artifact of the generally low optimal sampling probabilities.
\\\\
For the instance-specific approximation ratio, GW$_2$ is significantly lower than QUBO-Relaxed with 50 initializations, but its within $0.25$ standard deviations for QUBO-Relaxed with 10 initializations. The reason why increasing the number of initializations has a singificant improvement on the cost is likely also a result of the large number of constraint terms in the cost function.
\\\\
\vspace{-1cm}\\
\subsubsection{Portfolio Optimization}\hfill

\noindent
Solving the QUBO-Relaxed warmstart makes portfolio optimization become a QUBO-Relaxed programming problem, and as a result there is no longer a dependence on the number of random initializations. Thus, we don't distinguish between 10 and 50 initializations. With that being said, both metrics fall under \textbf{($\uparrow$)} with $\textrm{GW}_2$ outperforming QUBO-Relaxed.
\\\\
The instance-specific approximation ratios obtained by QUBO-Relaxed and $\textrm{GW}_2$ are both relatively high (when compared to other problems), but $\textrm{GW}_2$ is still more than $0.25$ standard deviations above QUBO-Relaxed. 
\\\\
In terms of optimal sampling probability, $\textrm{GW}_2$ out performs QUBO-Relaxed, but their probabilities are within $0.25$ standard deviations of each other. Notice that these sampling probabilities are significantly further from $1$ than the corresponding instance-specific approximation ratios, corresponding to QAOA states which are superpositions of both the optimal state and other sub-optimal but still high cost sates. For portfolio optimization in particular, because the budget constraint is large, any term that satisfies that constraint will have a high cost by default.
\\\\
The reasoning for why Portfolio Optimization behaves this way is likely due to its QUBO-Relaxedity, which not only changes the optimization routine for the QUBO-Relaxed warmstart but also impacts the performance of $\textrm{GW}_2$ (as can be seen empirically).
\\\\
\vspace{-1cm}\\
\subsubsection{Maximum Independent Set}\hfill

\noindent
Both MIS problems considering both metrics fall under \textbf{($\leftrightarrow$)}, specifically following the order of QUBO-Relaxed with 50 initializations, GW$_2$ Last rotation, then QUBO-Relaxed with 10 initializations.
\\\\
For MIS, the GW$_2$ warmstart is usually within 0.25 standard deviations of QUBO-Relaxed with 50 initializations. The instance-specific approximation ratio for MIS-GNP is the only problem and metric for which QUBO-Relaxed with 50 initializations performs above 0.25 standard deviations of the GW$_2$ warmstart. This suggests that when even GW$_2$ warmstarts preform better than QUBO-Relaxed with 10 but worse than 50 initializations, the margin between the best GW$_2$ warmstart and QUBO-Relaxed is not large.
Furthermore, MIS problems are a constrained optimization task, which might explain why increasing the number of initializations for QUBO-Relaxed provides an advantage over $\textrm{GW}_2$. 
\label{FullDepthData}

\begin{figure*}

    \foreach \i in {1,...,4} {
        \includegraphics[width=0.9\linewidth,page=\i]{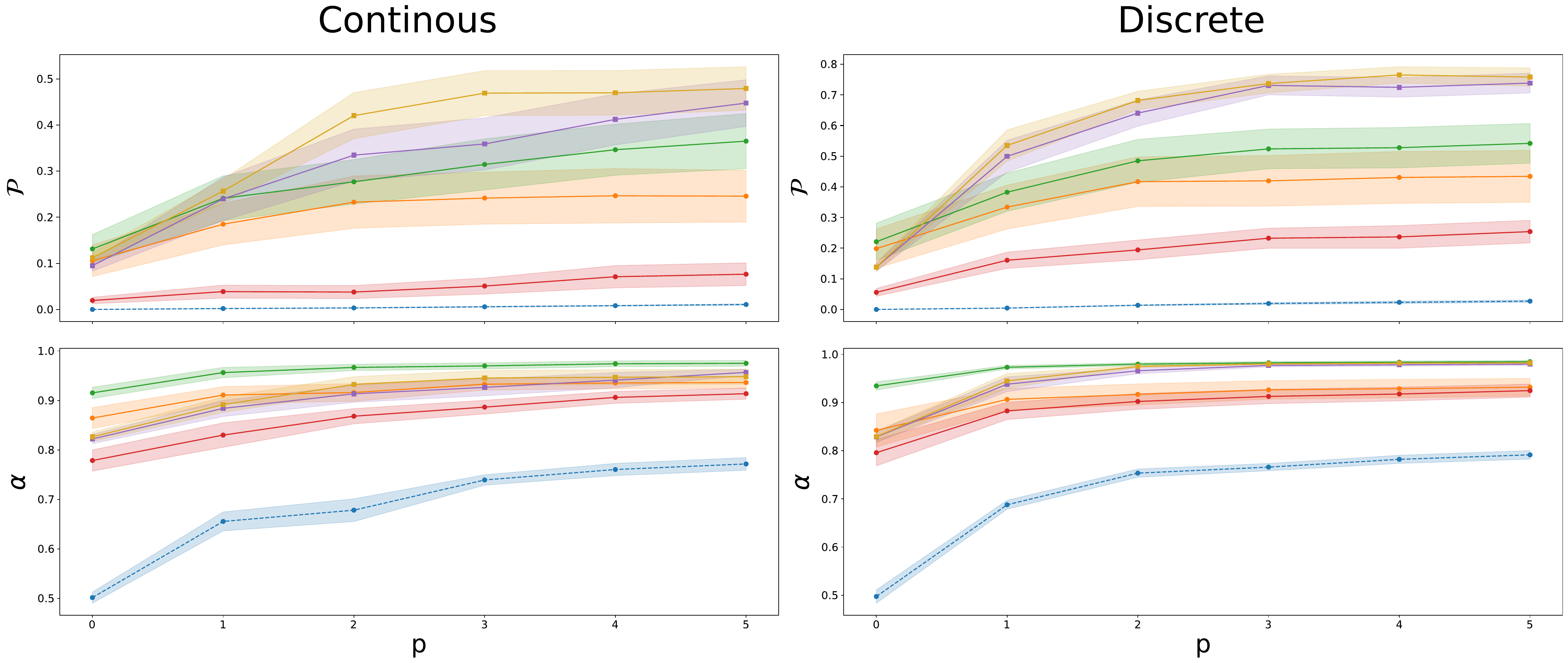}
        \vspace{-0.3cm}

    }
    \vspace{-.2cm}
    \caption{$(\alpha,\mathcal{P})$ data for $\textrm{GW}_{2}$ over $p$ for the $10$ continuous random QUBO, discrete random QUBO, TSP, Portfolio Optimization, MIS-GNP, and MIS-NWS problem instances. Datapoints are average values and shaded regions are $\pm 0.25$ standard deviations. Only one QUBO-Relaxed warmstart is shown for Portfolio Optimization because solving the warmstart transforms the problem into a QUBO-Relaxed programming problem.}
    \label{GW2_All}
\end{figure*}

\section{Conclusion}
We introduced a new method for applying semidefinite relaxations to solving QUBO problems via QAOA by using a mapping from QUBOs to Max-Cut problems as an intermediate step. We benchmarked this approach on various QUBO problems: Random QUBOs, TSP, Portfolio Optimization, and MIS. As a comparison, we used a non-QUBO-Relaxed warmstart, QUBO-Relaxed. 
\\\\
We found that the best choice of SDP warmstart was Goemans-Williamson projected onto $2$ dimensions, with a rotation applied to the auxiliary variable introduced in the mapping from QUBOs to Max-Cuts.  
\\\\
Because QUBO-Relaxed relies on a non-QUBO-Relaxed optimization problem, its performance is strongly dependent on the number of random initial conditions tested by the optimizer. To give a fair comparison between  $\textrm{GW}_2$  and QUBO-Relaxed we tested both 50 and 10 random initializations for the latter. Empirically we saw that 50 initializations generally outperformed 10 initializations, but this difference was problem dependent. Because this non-QUBO-Relaxed optimization in general has no performance guarantees, we expect that for larger problems, more iterations would be needed.
\\\\
The relative performance of QAOA between $\textrm{GW}_2$ warmstarts and QUBO-Relaxed is problem and metric dependent. Future researchers might be interested in better understanding the mechanisms behind these differences.

\section{Acknowledgements}
This work was supported by the U.S. Department of Energy through the Los Alamos National Laboratory. Los Alamos National Laboratory is operated by Triad National Security, LLC, for the National Nuclear Security Administration of U.S. Department of Energy (Contract No. 89233218CNA000001). The research presented in this article was supported by the Laboratory Directed Research and Development program of Los Alamos National Laboratory under project number 20230049DR as well as by the NNSA's Advanced Simulation and Computing Beyond Moore's Law Program at Los Alamos National Laboratory. Report Number: LA-UR-25-22532.

\bibliographystyle{unsrt}
\bibliography{sample-base}

\clearpage

\appendix
\begin{figure*}
    
    \includegraphics[width=0.8\linewidth]{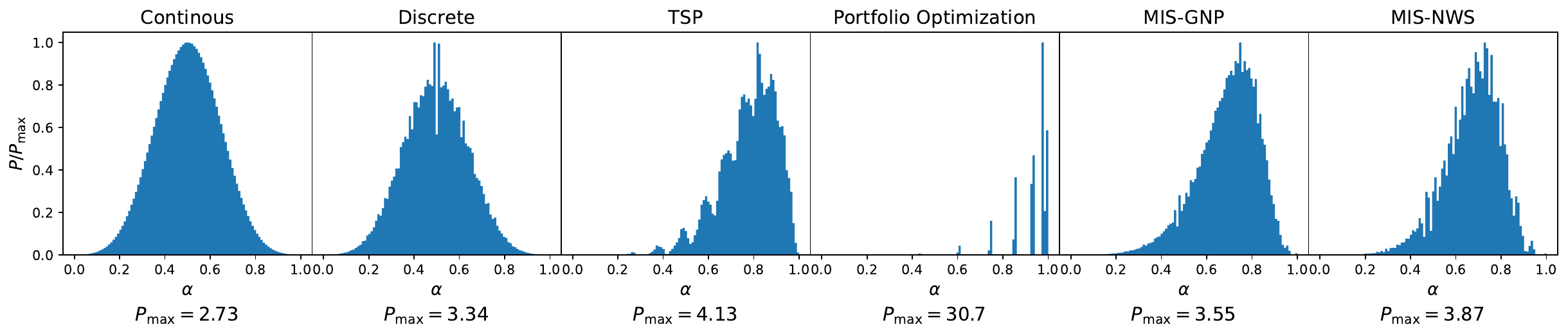}
    \caption{Probability density of $\alpha$ as a function of random states. $y$-axis is the Normalized density so the shape of the distributions can be compared more easily, normalization constants (maximum probability densities, $P_\textrm{max}$) are displayed below. Notice that the constrained optimization problems have skewed distributions.}
    \label{alphadist}
\end{figure*}
\section{QUBO to Ising Mapping Derivation}\label{Map Deriv}
Lemma~\ref{impossiblemap} demonstrates why the extra variable is required when mapping from QUBOs to Ising Problems.
\begin{lemma}\label{impossiblemap}
Given symmetric $Q\in \mathbb{R}^{n\times n}$, there does not exist (in general) some $J\in \mathbb{R}^{n\times n}$ such that for all $y\in\{-1,1\}^n$
\begin{equation}\label{Lemma1eq}
    \sum_{0\leq i,j< n}Q_{i,j}\left(\frac{y_i+1}{2}\right)\left(\frac{y_j+1}{2}\right)\sim\sum_{0\leq i,j< n}J_{i,j}y_iy_j
\end{equation}
Where $\sim$ denotes equality up to a constant difference.
\end{lemma}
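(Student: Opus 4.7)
The plan is to exploit the $\mathbb{Z}_2$ symmetry inherent in the Ising objective and produce an explicit counterexample $Q$ that violates it. I first observe that for \emph{any} symmetric $J \in \mathbb{R}^{n \times n}$ the quantity $\sum_{i,j} J_{i,j} y_i y_j$ is invariant under the global sign flip $y \mapsto -y$, since $(-y_i)(-y_j) = y_i y_j$. Hence if the relation in (\ref{Lemma1eq}) held for some $J$ and some constant offset $c$, the QUBO side would satisfy $\text{LHS}(y) - \text{LHS}(-y) = \text{RHS}(y) - \text{RHS}(-y) = 0$ for every $y \in \{-1,1\}^n$.

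Next I would exhibit an explicit $Q$ for which this invariance fails. The cleanest choice is $Q = I_n$: using $x_i \in \{0,1\}$ so that $x_i^2 = x_i$, the left-hand side of (\ref{Lemma1eq}) reduces to $\sum_i x_i = \tfrac{1}{2}\sum_i(y_i+1)$. Evaluating at $y = (1,\ldots,1)$ gives $\text{LHS} = n$, while evaluating at $y = (-1,\ldots,-1)$ gives $\text{LHS} = 0$. For $n \ge 1$ these values differ, contradicting the $y \mapsto -y$ invariance derived in the previous step. Therefore no symmetric $J$ can satisfy (\ref{Lemma1eq}) when $Q = I_n$, which proves the lemma.

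The proof is short because the obstruction is structural: after expanding
\begin{equation*}
\left(\tfrac{y_i+1}{2}\right)\left(\tfrac{y_j+1}{2}\right) = \tfrac{1}{4}\bigl(y_i y_j + y_i + y_j + 1\bigr),
\end{equation*}
the QUBO naturally produces linear-in-$y$ terms whenever $Q$ has a nonzero row sum, whereas the Ising form admits only constants (from $J_{ii} y_i^2 = J_{ii}$) and pure quadratic monomials $y_i y_j$ with $i \neq j$. The only mildly subtle point is ruling out that some clever choice of off-diagonal $J$ could absorb those linear terms; the $y \mapsto -y$ argument dispatches this in one stroke and avoids a coefficient-by-coefficient comparison. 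This is the main conceptual step, and once it is in hand the identity matrix serves as a minimal witness that the linear part of the expansion genuinely cannot be reproduced, motivating the auxiliary $(n{+}1)$st variable used in the QUBO-to-Max-Cut mapping of Section 2.
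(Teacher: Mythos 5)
Your proof is correct, and it takes a genuinely different route from the paper's. The paper expands $\bigl(\tfrac{y_i+1}{2}\bigr)\bigl(\tfrac{y_j+1}{2}\bigr)$ and argues that the QUBO side contains terms linear in $y$ while the Ising side contains only quadratic monomials, so no $J$ can exist; this is a coefficient-comparison argument that implicitly relies on the linear independence of the monomials $1$, $y_i$, $y_iy_j$ as functions on $\{-1,1\}^n$ (a point worth care, since on the hypercube $y_i^2=1$ and the diagonal Ising terms collapse to constants), and the paper never exhibits a specific $Q$ realizing the ``in general'' claim. Your argument instead isolates the structural obstruction as the global spin-flip invariance $y\mapsto -y$ of any Ising form, and then supplies $Q=I_n$ as an explicit witness whose QUBO objective $\sum_i \tfrac{1+y_i}{2}$ takes the values $n$ and $0$ at $y=\mathbf{1}$ and $y=-\mathbf{1}$. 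This buys you two things: it rules out in one stroke any cancellation among off-diagonal $J$ entries without a term-by-term comparison, and it makes the existential content of the lemma concrete. The paper's expansion, on the other hand, directly displays the offending linear term $\tfrac{1}{2}\sum_{i,j}Q_{i,j}y_i$, which is what motivates the specific construction of the auxiliary-variable mapping in the subsequent lemma, so the two proofs are complementary rather than interchangeable in the narrative.
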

\begin{proof}
Assume that such a $J$ exists. Expanding equation (\ref{Lemma1eq}),
\begin{equation}
    \begin{aligned}
        &\sum_{0\le i,j<n}^{ }Q_{i,j}\left(\frac{y_{i}+1}{2}\right)\left(\frac{y_{j}+1}{2}\right)\\
        &=\frac{1}{4}\sum_{0\le i,j<n}^{ }Q_{i,j}\left(1+y_{i}+y_{j}+y_{i}y_{j}\right)\\
        &=\frac{1}{4}\textrm{sum}(Q)+\frac{1}{4}\sum_{0\le i,j<n}^{ }Q_{i,j}y_{i}y_{j}+\frac{1}{2}\sum_{0\le i,j<n}^{ }Q_{i,j}y_{i}
    \end{aligned}
\end{equation}
Thus, the left side of (\ref{Lemma1eq}) has terms which are linear in $y$ whereas the right side only consists of quadratic terms. Thus, no such $J$ could exist.
\end{proof}\noindent
As a result of Lemma~\ref{impossiblemap}, its clear that there is no simple mapping from a $n$-variable QUBO to an $n$-variable Ising problem. However, if we introduce degeneracy into the latter, then it becomes possible to construct such a mapping.
\\\\
The degeneracy will rise from adding an extra degree of freedom, represented by $y_n$, to the Ising problem and changing the mapping to \footnote{Note that both $y$ and $-y$ get mapped to the same $x$. Hence this }
$$x_i = \frac{1}{2}(1-y_iy_n)$$
\begin{lemma}\label{impossiblemap}
Given symmetric $Q\in \mathbb{R}^{n\times n}$, there is an unique $J\in \mathbb{R}^{(n+1)\times (n+1)}$ and constant $c\in R$ such that for all $y\in\{-1,1\}^{n+1}$
\begin{equation}\label{Lemma2eq}
\hspace{-0.5cm}
    \sum_{0\le i,j\le n}^{ }J_{i,j}y_{i}y_{j}+c=\frac{1}{4}\sum_{0\le i,j<n}^{ }Q_{i,j}\left(1-y_{i}y_{n}\right)\left(1-y_{j}y_{n}\right)
\end{equation}
\end{lemma}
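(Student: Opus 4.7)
The plan is to expand the right-hand side, use $y_n^2 = 1$ to collapse everything into a multilinear polynomial in $y_0,\ldots,y_n$, and then read off the entries of $J$ and the scalar $c$ by matching coefficients of distinct quadratic monomials. Uniqueness will follow from the standard fact that the monomial basis $\{1\} \cup \{y_i y_j : 0 \le i < j \le n\}$ is linearly independent as a set of functions on $\{-1,1\}^{n+1}$, provided one adopts the Ising convention (already in play in Section~2.1) that $J$ is symmetric with zero diagonal.

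Concretely, I would first distribute
\begin{equation*}
(1 - y_i y_n)(1 - y_j y_n) = 1 - y_i y_n - y_j y_n + y_i y_j\, y_n^2
\end{equation*}
and replace $y_n^2$ by $1$. Summing against $Q_{i,j}$ and using $Q_{i,j}=Q_{j,i}$ to combine the two cross terms, the right-hand side of (\ref{Lemma2eq}) becomes
\begin{equation*}
\tfrac{1}{4}\,\textrm{sum}(Q) \;-\; \tfrac{1}{2}\sum_{0\le i<n}\Bigl(\sum_{0\le j<n} Q_{i,j}\Bigr) y_i y_n \;+\; \tfrac{1}{4}\sum_{0\le i,j<n} Q_{i,j}\, y_i y_j.
\end{equation*}
Matching this against $\sum_{0\le i,j\le n} J_{i,j} y_i y_j + c$ under the convention that $J$ is symmetric with zero diagonal, I would read off $c = \tfrac{1}{4}\,\textrm{sum}(Q)$, $J_{i,j} = \tfrac{1}{4} Q_{i,j}$ for $i,j<n$, $J_{i,n} = J_{n,i} = -\tfrac{1}{4}\sum_{j<n} Q_{i,j}$, and $J_{n,n}=0$. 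This is precisely $J = -A/4$ for the adjacency matrix $A$ defined just above the lemma, which is what connects the lemma back to the earlier mapping formula.

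For uniqueness, I would appeal to linear independence of the multilinear monomials on the Boolean cube: any real-valued function on $\{-1,1\}^{n+1}$ admits a unique Fourier expansion in the basis $\{\prod_{i \in S} y_i : S \subseteq \{0,\ldots,n\}\}$, so the degree-$\le 2$ part being fixed forces both $c$ and the symmetric, zero-diagonal $J$. The main (really the only) obstacle is a bookkeeping subtlety rather than a mathematical one: without the zero-diagonal and symmetry conventions, the diagonal entries $J_{i,i} y_i^2 = J_{i,i}$ are absorbable into $c$, and off-diagonal entries $J_{i,j}, J_{j,i}$ are only determined through their sum. I would therefore state these conventions explicitly at the top of the proof (consistent with how Ising matrices are used elsewhere in the paper) so that the uniqueness claim is unambiguous; no deeper difficulty is expected.
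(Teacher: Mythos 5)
Your proposal is correct and follows essentially the same route as the paper: expand the right-hand side using $y_n^2=1$, combine the two cross terms via the symmetry of $Q$, and read off $J$ and $c$ by matching coefficients, arriving at $J_{n,i}=-\tfrac{1}{4}\sum_{j<n}Q_{i,j}$, $J_{n,n}=0$, and $c$ containing $\tfrac{1}{4}\mathrm{sum}(Q)$ exactly as in Appendix A. The one substantive difference is that you handle the uniqueness claim more carefully than the paper does: you correctly observe that $J$ and $c$ are only unique after fixing a symmetry and diagonal convention (since $J_{i,i}y_i^2=J_{i,i}$ is absorbable into $c$), and indeed the paper's own choice $J_{i,j}=\tfrac{1}{4}Q_{i,j}$ for all $i,j<n$ keeps $Q$'s diagonal inside $J$ rather than in $c$, whereas your zero-diagonal convention would move $\tfrac{1}{4}\mathrm{tr}(Q)$ into $c$ --- either is fine, but the lemma's ``unique'' is only literally true once such a convention is stated, which your proof does and the paper's does not.
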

\begin{proof}
Expanding equation (\ref{Lemma2eq}) by seperating the terms out that depend on $y_n$ gives,
\begin{equation}
    \begin{aligned}
        J_{n,n}y_{n,n}+2\sum_{0\le i<n}^{ }J_{i,j}y_{i}y_{n}+\sum_{0\le i,j<n}^{ }J_{i,j}y_{i}y_{j}+c\\=\frac{1}{4}\textrm{sum(Q)}+\frac{1}{4}\sum_{0\le i,j<n}^{ }Q_{i,j}y_{i}y_{j}-\frac{1}{2}\sum_{0\le i,j<n}^{ }Q_{i,j}y_{i}y_{n},
    \end{aligned}
\end{equation}
from which it clearly follows that
\begin{equation}
    \begin{aligned}
    J_{i,j}&=\frac{1}{4}Q_{i,j}\\
J_{n,i}&=-\frac{1}{4}\sum_{0\leq j < n}Q_{i,j}\\
J_{n,n}&=0\\
c&=\frac{1}{4}\textrm{sum}(Q).
 \end{aligned}
\end{equation}
\end{proof}
\noindent
Equation (\ref{QUBO_to_MAXCUT}) is a trivial corollary of this lemma.
\section{Problem Instances Background}\label{EXProblem}

\subsection{Random QUBOs}
Random QUBO problem instances are obtained by randomly generating symmetric matrices $Q\in\mathbb{R}^{16\times 16}$. The elements of the upper-triangle of $Q$ (i.e. $Q_{ij}$ with $i \leq j$) are decided by taking independent samples from a fixed distribution; we consider two choices for the distribution below:
\begin{itemize}
    \item the (continuous) uniform distribution on $[-1,1]$, and 
    \item the (discrete) uniform distribution on $\{-1,1\}$.
\end{itemize}
The remaining elements of each matrix $Q$ (i.e. $Q_{ij}$ with $i > j$) are then uniquely determined due to the requirement that $Q$ be symmetric.

\subsection{Travelling Salesman}

Given a graph $G=(V,E)$ with adjacency matrix $A$, the Travelling Salesman Problem (TSP) \cite{ref1} is finding the Hamiltonian cycle of $G$ with the smallest total edge weight. Because the Hamiltonian Cycle problem is itself NP-Hard \cite{Book_1975}, the TSP problem is NP-Hard as well. Expansions of TSP have applications to various applications such as Vehicle Routing \cite{Dorling_2017}, Disaster Retrieval \cite{maziero2021branchandcutalgorithmscoveringsalesman}, and Equitable Routing \cite{bhadoriya2024equitableroutingrethinking}.
\\\\
Following \cite{Qian_2023}, we focus on the case of fully connected symmetric TSP, corresponding to complete undirected graphs.  A given sequence of vertices can be encoding using binary decision variables $x\in\{0,1\}^{|V|^2}$, where (for $0\leq i,t<|V|$ ) $x_{|V|t+i}\defeq x_{t,i}=1$ if and only if vertex $i$ is visited at time $t$. The optimal sequence of vertices for a given TSP problem corresponds to the $x$ which minimizes\footnote{Here $x_{|V|,i}=x_{0,i}$ i.e. periodicity is enforced by construction.}
\begin{equation}
\begin{aligned}
C(x)&=C_{\textrm{dist}}(x)+\lambda{C_{\textrm{penalty}}(x)}\\
&=\sum_{0\leq i,j < |V|}A_{i,j}\sum_{0\leq t <|V|}x_{t,i}x_{(t+1),j}\\&+\lambda\sum_{0\leq t < |V|}\left(1-\sum_{0\leq i < |V|}^{n-1}x_{t,i}\right)^2\\&+\lambda
\sum_{0\leq i < |V|}\left(1-\sum_{0\leq t < |V|}x_{t,i}\right)^2
\end{aligned}
\end{equation}
Penalty terms are included to ensure that the vertex sequence is indeed a Hamiltonian cycle (i.e. that every vertex is visited exactly once and there is exactly one vertex for each timestep). The constant $\lambda\in \mathbb{R}$ is a Lagrange multiplier. To ensure that the Hamiltonian cycle constraints are satisfied we require $\lambda>\max(A_{i,j})$.
\\\\
The cost function can be further improved by eliminating rotational symmetry. To do so, we fix $x_{0}=1$ and define $\tilde x \in \{0,1\}^{(|V|-1)^2}$ so that (for $0\leq i,t<|V|-1$)
\begin{equation}
    \tilde x_{(|V|-1)t+i}\defeq \tilde x_{t,i}=x_{(t+1),(i+1)}.
\end{equation}
Re-expressing $C$ in terms of $\tilde x$
\begin{equation}
\begin{aligned}
C(x)=\tilde C(\tilde x)&=\tilde C_{\textrm{dist}}(\tilde x)+\lambda{\tilde C_{\textrm{penalty}}(\tilde x)}\\& + \sum_{0\leq i < |V|-1}A_{0,i}(\tilde x_{0,i}+\tilde x_{(|V|-2),i})\\
&=\sum_{0\leq i,j < |V|-1}A_{i,j}\sum_{0\leq t <|V|-1}\tilde x_{t,i} \tilde x_{(t+1),j}\\&+\lambda\sum_{0\leq t < |V|-1}\left(1-\sum_{0\leq i < |V|-1}^{n-1}\tilde x_{t,i}\right)^2\\&+\lambda
\sum_{0\leq i < |V|-1}\left(1-\sum_{0\leq t < |V|-1}\tilde x_{t,i}\right)^2.
\end{aligned}
\end{equation}
Because this function is quadratic in the entries of $\tilde x$, we can define a a QUBO\footnote{The minus sign comes from the fact that we defined QUBOs as maximization problems.}with $Q\in\mathbb{R}^{(|V|-1)^2\times (|V|-1)^2}$
\begin{equation}
    \tilde x^T Q\tilde x = -\tilde C(\tilde x)
\end{equation}
To generate problem instances, we sampled points $\{p_i\}_{0\leq i < 5}$ uniformly from $[-1,1]^2$ and  set $A_{i,j}$ in the adjacency matrix $A$ to be the Euclidean distance from $p_i$ to $p_j$, i.e., $A_{i,j}=\|p_i-p_j\|_2$. The Lagrange multiplier $\lambda$ was set to $\lambda = 1.1\max(A_{i,j})$.

\subsection{Portfolio Optimization}
Portfolio Optimization is the process of diversifying one's assets to ensure an appropriate balance between risk and expected returns. Portfolio Optimization has been used as a benchmarking problem for different variations of QAOA \cite{Egger_2021, Brandhofer_2022}. 

Given the asset price of $n$ stocks, the goal is to find a vector $x\in \{0,1\}^n$ that maximizes return, minimizes risk, and satisfies $1^Tx=B$ where $B$ is a budget $1\leq B \leq n$.
\\\noindent
Following \cite{Egger_2021}, $S_{i,k}$, the price of asset $i$ over time $k$ is generated using Geometric Brownian Motion for $N=250$ time steps and $n$ assets:
\begin{equation}
    S_{i,k}=S_{i,0}\exp[(\mu_{i}-\sigma_{i}^2/2)k/N+\sigma_{i}W_{k}],
\end{equation}
with $S_{i,0}=1$ for all $0\leq i < n$. Both the drifts, $\mu_{i}$, and the volatilities $\sigma_{i}$ are randomly generated using uniform distribution set within the range $[-0.05,0.05]$ and $[-0.20,0.20]$, respectively. $W_{k} = \sum_{l=0}^jz_{l}/\sqrt{N}$ represents the cumulative Brownian motion, where $z_{l}$ is a random variable with a standard normal distribution.
\\
For a set of asset prices, the return of asset $i$ from time $k$ to $k+1$ is, $r_{i,k} = S_{i,k}/S_{i,k-1}-1$. Using these returns, a covariance matrix $\Sigma\in\mathbb{R}^{n\times n}$ and a mean return vector $\mu\in\mathbb{R}^n$ can be calculated \cite{Buonaiuto2023}.
\\
The optimal asset vector $x$ will then maximize
\begin{equation}
    \mu^T x -qx^T\Sigma x -\lambda(1^Tx-B)^2,
\end{equation}
where $\lambda$ is a Lagrange multiplier for the penalty (budget) constraint.
\\\\
This problem can be converted to a QUBO with $Q\in \mathbb{R}^{n\times n}$ such that
\begin{equation}
    x^TQx =\mu^T x -qx^T\Sigma x -\lambda(1^Tx-B)^2.
\end{equation}
For our experiments, we set $\lambda = \textrm{sum}(|\Sigma|)+\textrm{sum}(|\mu|)$, $n=16$, $B=8$, $q=0.5$.

\subsection{Maximum Independent Set}
Given a graph $G=(V,E)$ a subset $U \subseteq V$ is said to be an independent set if there are no edges between any of the vertices of $U$.
\\\\
The Maximum Independent Set (MIS) problem is to find an independent set of maximum cardinality of an arbitrary graph.
This problem is equivalent to the NP-hard set packing problem \cite{Book_1975}. There exist greedy approximate (classical) algorithms for MIS \cite{Coja_Oghlan_2014}. The MIS problem has also been previously studied with QAOA in \cite{farhi2020quantumapproximateoptimizationalgorithm,9550042}.
Given a weighted graph $G = (V,E)$ with edge weights $w: E \to \mathbb{R}$, the MIS can be formulated as a QUBO \cite{Lucas_2014} where $Q\in\mathbb{R}^{|V|\times |V|}$
\begin{equation}
    x^TQx=\sum_{0\leq i < |V|}x_i-c\sum_{(i,j) \in E}w_{ij} x_ix_j
\end{equation}
and $c$ is a free variable controlling the weight of the penalty term. To ensure that the optimal vertex set is indeed independent it's required that $c>1$. For our simulations, $c$ was set to $1.1$.
\\\\
The graphs
were sampled from two distributions:
\footnote{The values for graph distrubution parameters were chosen so that both the binomial and Newman-Watts-Strogatz graphs had similar independence numbers on average. If the graph generated is not connected we resample the distribution (i.e. rejection sample).}
\begin{itemize}
    \item Erdős–Rényi random graphs (GNP) (see \cite{Frieze_Karoński_2015}). Here $n=16,p=0.25$,
    \item Newman–Watts–Strogatz random graphs (NWS) (see \cite{Watts1998}). Here, $n=16,k=3,p=0.5$.
\end{itemize}

\section{Depth 0 Data}\label{Depth0Data}
Given a warmstart and a corresponding $|\psi_\textrm{init}\rangle$, at depth $0$ we have (for $H$, $E_\textrm{min}$, $E_\textrm{max}$),
\begin{equation}
    \begin{aligned}
        \alpha &= \frac{\langle \psi_\textrm{init} | H | \psi_\textrm{init}\rangle-E_\textrm{min}}{E_\textrm{max}-E_\textrm{min}}\\
\mathcal{P} &= \sum_{|\phi\rangle}|\langle \phi | \psi_\textrm{init}\rangle|^2 \\ & \textrm{where} \quad \langle \phi | H| \phi\rangle = E_\textrm{max}.
    \end{aligned}
\end{equation}
Because $\alpha$ for the optimal depth-$p$ parameters can only increase with $p$, depth-$0$ results are a useful heuristic for estimating the relative performance of warmstart techniques \cite{tate2022bridgingclassicalquantumsdp}.
In particular, we are interested in how the choice of vertex-at-top rotation qubit affects the obtained $(\alpha,\mathcal{P})$. A natural assumption would be that there are 3 unique distributions $(\alpha,\mathcal{P})$ dependent on the choice of vertex-at-top rotation: 
\begin{enumerate}
    \item vertex-at-top rotation on any of the qubits except the last qubit,
    \item vertex-at-top rotation on the last qubit,
    \item and no vertex-at-top rotation.
\end{enumerate}
Figure~\ref{HistData} contains a datatable with the frequency at which each choice of vertex at top rotation maximizes either $\alpha$ or $\mathcal{P}$. Outliers are indicated, and notably only appear in the data for vertex-at-top rotations applied to the last qubit or when no vertex-at-top rotation is applied. 
\\\\
Based on this assumption, it is sufficient to study $3$ vertex-at-top rotation choices, with each each corresponding to one of these $3$ distributions. As in Section~\ref{FullDepthData}, we consider vertex-at-top rotations on the First, Last, and None of the qubits.\\ \\Figure~\ref{CompData} contains a datatable with the mean and standard deviation for $\alpha,\mathcal{P}$ for each warmstart with each of these vertex-at-top rotation options.

\section{\texorpdfstring{$\alpha$ }
{TEXT}  Distributions}\label{alphadistributions}
 The instance-specific approximation ratio ($\alpha$) is a convenient metric because it is normalized to $[0,1]$ for all problems. However, the distribution of $\alpha$ over states is problem dependent.

 More formally, we want to consider the continuous random variable $\boldsymbol\alpha \sim P(\alpha)$ given by
\begin{equation}
    \boldsymbol\alpha = \frac{X^T\mathbf{A}X-\min_{x\in \{-1,1\}^{n}}x^T\mathbf{A}x}{\max_{x\in \{-1,1\}^{n}}x^T\mathbf{A}x-\min_{x\in \{-1,1\}^{n}}x^T\mathbf{A}x},
\end{equation}
where (for a given problem type) $\mathbf{A}$ is a random matrix sampled by obtaining the adjacency matrix of the corresponding graph of a randomly generated problem instance, and $X$ is a random element of $\{-1,1\}^n$ with uniform distribution.
\\\\
To quantify the dependence of the distribution on problem type, we estimated the probability distribution  $P(\alpha)$ for each problem (Figure~\ref{alphadist}). For each problem instance, $\alpha$ was computed for each of the $2^{17}$ computational basis states, yielding a total of $1000\cdot2^{17}$ samples for each problem type. The unit interval was then divided into $101$ uniformly spaced sub-intervals. The probability density of $\alpha$ was approximated to be piecewise constant on each of these sub-intervals, with value equal to the number of $\alpha$ samples in the interval divided by the $1000/101\cdot 2^{17}$ (the product of the width of the subinterval-intervals with the total number of samples).

The most notable feature of these distributions (besides their problem dependence) is that the distribution for the Random QUBOs is symmetric, whereas the distributions for TSP, Portfolio Optimization, and MIS are skewed left. This is likely due to the constraint terms in the latter, which results in states satisfying the constraints having comparatively large $\alpha$ values, even if they are not optimal.
\section{Experimental Parameters}\label{Exparams}
All code used to generate data is available at \cite{Bhattacharya_QUBO_to_MaxCut}.
\begin{center}
Warmstart info\vspace{0.1cm}
\fontsize{8pt}{8pt}\selectfont

  \begin{tabular}{p{0.14\textwidth}p{.14\textwidth} p{0.14\textwidth}} 

 \hline\\
 $\hspace{0.025\textwidth}\textrm{QUBO-Relaxed}$ & $\hspace{0.07\textwidth}\textrm{BM}_k$ & $\hspace{0.07\textwidth}\textrm{GW}_k$ \\
 \hline\hline 

     \begin{itemize}
         \item $\varepsilon=0.1$
         \item $y^TQy$ was optimized using L-BFGS-B algorithm.
     \end{itemize}
  &  
     \begin{itemize}
         \item 100iterations
         \item 50 initial conditions
         \item $\eta=0.05$
     \end{itemize}
  &  

     \begin{itemize}
         \item 50\hspace{0.1cm}random bases sampled
     \end{itemize}\\

 \hline
\end{tabular}
\end{center}
All QAOA runs were done as follows:
\begin{itemize}
    \item The parameters were optimized using COBYLA.    \item For each circuit, the QAOA optimization loop was ran 10 times, each time with a different starting initialization of the parameters. If $p=1$, these were drawn uniformly. If $p>1$, $9$ of the 10 initializations were drawn uniformly and the other was the best performing (cost wise) initial parameters from depth $p-1$.
    \item The final output of a circuit was the optimized parameters (out of the set of $10$) with the largest cost.
    \item All simulations were computed with a custom QAOA simulator based on \cite{Lykov_2023}.
\end{itemize}
 
\begin{figure*}

\begin{center}
\captionof*{table}{\fontsize{8pt}{8pt}\selectfont Continous}
\vspace{-0.3cm}

\resizebox{\textwidth}{!}{%
\begin{tabular}{|l||*{18}{c|}}\hline 
 \backslashbox{$\alpha$}{$\mathcal{P}$}&0&1&2&3&4&5&6&7&8&9&10&11&12&13&14&15&16&None
\\\hline\hline
$\textrm{BM}_2$ & \backslashbox{ 58 }{ 42 }&\backslashbox{ 37 }{ 57 }&\backslashbox{ 46 }{ 58 }&\backslashbox{ 51 }{ 46 }&\backslashbox{ 41 }{ 64 }&\backslashbox{ 46 }{ 51 }&\backslashbox{ 44 }{ 51 }&\backslashbox{ 51 }{ 56 }&\backslashbox{ 44 }{ 62 }&\backslashbox{ 56 }{ 46 }&\backslashbox{ 51 }{ 54 }&\backslashbox{ 47 }{ 54 }&\backslashbox{ 58 }{ 50 }&\backslashbox{ 47 }{ 50 }&\backslashbox{ 44 }{ 51 }&\backslashbox{ 51 }{ 60 }&\backslashbox{ \color{red}{\textbf{195}} }{ \color{blue}{\textbf{90}} }&\backslashbox{ 33 }{ 58 }\\\hline 
$\textrm{BM}_3$ & \backslashbox{ 43 }{ 52 }&\backslashbox{ 40 }{ 64 }&\backslashbox{ 38 }{ 63 }&\backslashbox{ 49 }{ 60 }&\backslashbox{ 44 }{ 50 }&\backslashbox{ 46 }{ 61 }&\backslashbox{ 40 }{ 54 }&\backslashbox{ 48 }{ 60 }&\backslashbox{ 46 }{ 48 }&\backslashbox{ 47 }{ 68 }&\backslashbox{ 53 }{ 55 }&\backslashbox{ 45 }{ 47 }&\backslashbox{ 39 }{ 45 }&\backslashbox{ 40 }{ 57 }&\backslashbox{ 34 }{ 49 }&\backslashbox{ 41 }{ 59 }&\backslashbox{ \color{red}{\textbf{221}} }{ \color{blue}{\textbf{82}} }&\backslashbox{ 86 }{ 26 }\\\hline 
$\textrm{GW}_2$ & \backslashbox{ 45 }{ 49 }&\backslashbox{ 41 }{ 64 }&\backslashbox{ 42 }{ 55 }&\backslashbox{ 51 }{ 64 }&\backslashbox{ 40 }{ 48 }&\backslashbox{ 56 }{ 66 }&\backslashbox{ 49 }{ 45 }&\backslashbox{ 48 }{ 53 }&\backslashbox{ 47 }{ 69 }&\backslashbox{ 44 }{ 57 }&\backslashbox{ 52 }{ 53 }&\backslashbox{ 45 }{ 46 }&\backslashbox{ 51 }{ 66 }&\backslashbox{ 47 }{ 48 }&\backslashbox{ 48 }{ 66 }&\backslashbox{ 57 }{ 51 }&\backslashbox{ \color{red}{\textbf{227}} }{ 76 }&\backslashbox{ 14 }{ 28 }\\\hline 
$\textrm{GW}_3$ & \backslashbox{ 43 }{ 49 }&\backslashbox{ 29 }{ 60 }&\backslashbox{ 37 }{ 65 }&\backslashbox{ 44 }{ 57 }&\backslashbox{ 39 }{ 52 }&\backslashbox{ 57 }{ 54 }&\backslashbox{ 44 }{ 48 }&\backslashbox{ 41 }{ 52 }&\backslashbox{ 45 }{ 64 }&\backslashbox{ 50 }{ 60 }&\backslashbox{ 43 }{ 55 }&\backslashbox{ 43 }{ 50 }&\backslashbox{ 50 }{ 71 }&\backslashbox{ 51 }{ 51 }&\backslashbox{ 44 }{ 59 }&\backslashbox{ 52 }{ 70 }&\backslashbox{ \color{red}{\textbf{287}} }{ 75 }&\backslashbox{ 1 }{ 8 }\\\hline 
\end{tabular}}
\end{center}

\begin{center}
\captionof*{table}{\fontsize{8pt}{8pt}\selectfont Discrete}
\resizebox{\textwidth}{!}{%
\begin{tabular}{|l||*{18}{c|}}\hline 
 \backslashbox{$\alpha$}{$\mathcal{P}$}&0&1&2&3&4&5&6&7&8&9&10&11&12&13&14&15&16&None
\\\hline\hline
$\textrm{BM}_2$ & \backslashbox{ 56 }{ 59 }&\backslashbox{ 44 }{ 51 }&\backslashbox{ 50 }{ 55 }&\backslashbox{ 49 }{ 42 }&\backslashbox{ 61 }{ 52 }&\backslashbox{ 41 }{ 56 }&\backslashbox{ 33 }{ 51 }&\backslashbox{ 45 }{ 40 }&\backslashbox{ 44 }{ 51 }&\backslashbox{ 48 }{ 55 }&\backslashbox{ 65 }{ 67 }&\backslashbox{ 55 }{ 37 }&\backslashbox{ 52 }{ 64 }&\backslashbox{ 53 }{ 55 }&\backslashbox{ 33 }{ 50 }&\backslashbox{ 54 }{ 48 }&\backslashbox{ \color{red}{\textbf{174}} }{ \color{blue}{\textbf{106}} }&\backslashbox{ 43 }{ 61 }\\\hline 
$\textrm{BM}_3$ & \backslashbox{ 37 }{ 57 }&\backslashbox{ 49 }{ 60 }&\backslashbox{ 38 }{ 61 }&\backslashbox{ 39 }{ 47 }&\backslashbox{ 33 }{ 50 }&\backslashbox{ 46 }{ 65 }&\backslashbox{ 53 }{ 44 }&\backslashbox{ 41 }{ 61 }&\backslashbox{ 38 }{ 35 }&\backslashbox{ 37 }{ 69 }&\backslashbox{ 27 }{ 58 }&\backslashbox{ 50 }{ 59 }&\backslashbox{ 54 }{ 55 }&\backslashbox{ 46 }{ 59 }&\backslashbox{ 47 }{ 58 }&\backslashbox{ 35 }{ 57 }&\backslashbox{ \color{red}{\textbf{244}} }{ 75 }&\backslashbox{ 86 }{ 30 }\\\hline 
$\textrm{GW}_2$ & \backslashbox{ 54 }{ 64 }&\backslashbox{ 46 }{ 55 }&\backslashbox{ 49 }{ 55 }&\backslashbox{ 51 }{ 52 }&\backslashbox{ 46 }{ 54 }&\backslashbox{ 55 }{ 62 }&\backslashbox{ 42 }{ 58 }&\backslashbox{ 46 }{ 56 }&\backslashbox{ 37 }{ 37 }&\backslashbox{ 58 }{ 60 }&\backslashbox{ 50 }{ 60 }&\backslashbox{ 44 }{ 66 }&\backslashbox{ 54 }{ 40 }&\backslashbox{ 50 }{ 47 }&\backslashbox{ 57 }{ 72 }&\backslashbox{ 42 }{ 53 }&\backslashbox{ \color{red}{\textbf{204}} }{ \color{blue}{\textbf{88}} }&\backslashbox{ 17 }{ 26 }\\\hline 
$\textrm{GW}_3$ & \backslashbox{ 38 }{ 66 }&\backslashbox{ 39 }{ 62 }&\backslashbox{ 56 }{ 55 }&\backslashbox{ 45 }{ 48 }&\backslashbox{ 41 }{ 58 }&\backslashbox{ 51 }{ 64 }&\backslashbox{ 37 }{ 49 }&\backslashbox{ 44 }{ 65 }&\backslashbox{ 31 }{ 45 }&\backslashbox{ 57 }{ 64 }&\backslashbox{ 50 }{ 63 }&\backslashbox{ 51 }{ 59 }&\backslashbox{ 47 }{ 44 }&\backslashbox{ 45 }{ 50 }&\backslashbox{ 48 }{ 64 }&\backslashbox{ 44 }{ 55 }&\backslashbox{ \color{red}{\textbf{276}} }{ 85 }&\backslashbox{ 0 }{ 5 }\\\hline 
\end{tabular}}
\end{center}

\begin{center}
\captionof*{table}{\fontsize{8pt}{8pt}\selectfont TSP}
\resizebox{\textwidth}{!}{%
\begin{tabular}{|l||*{18}{c|}}\hline 
 \backslashbox{$\alpha$}{$\mathcal{P}$}&0&1&2&3&4&5&6&7&8&9&10&11&12&13&14&15&16&None
\\\hline\hline
$\textrm{BM}_2$ & \backslashbox{ 15 }{ 59 }&\backslashbox{ 25 }{ 46 }&\backslashbox{ 16 }{ 53 }&\backslashbox{ 20 }{ 50 }&\backslashbox{ 26 }{ 50 }&\backslashbox{ 22 }{ 44 }&\backslashbox{ 23 }{ 50 }&\backslashbox{ 19 }{ 50 }&\backslashbox{ 19 }{ 62 }&\backslashbox{ 27 }{ 49 }&\backslashbox{ 19 }{ 46 }&\backslashbox{ 31 }{ 53 }&\backslashbox{ 20 }{ 71 }&\backslashbox{ 29 }{ 53 }&\backslashbox{ 29 }{ 51 }&\backslashbox{ 22 }{ 59 }&\backslashbox{ \color{red}{\textbf{620}} }{ 65 }&\backslashbox{ 18 }{ \color{blue}{\textbf{89}} }\\\hline 
$\textrm{BM}_3$ & \backslashbox{ 8 }{ 45 }&\backslashbox{ 5 }{ 53 }&\backslashbox{ 15 }{ 60 }&\backslashbox{ 10 }{ 68 }&\backslashbox{ 9 }{ 59 }&\backslashbox{ 9 }{ 77 }&\backslashbox{ 6 }{ 57 }&\backslashbox{ 13 }{ 58 }&\backslashbox{ 13 }{ 61 }&\backslashbox{ 5 }{ 69 }&\backslashbox{ 6 }{ 52 }&\backslashbox{ 8 }{ 57 }&\backslashbox{ 11 }{ 55 }&\backslashbox{ 13 }{ 56 }&\backslashbox{ 15 }{ 60 }&\backslashbox{ 9 }{ 58 }&\backslashbox{ \color{red}{\textbf{795}} }{ 40 }&\backslashbox{ 50 }{ 15 }\\\hline 
$\textrm{GW}_2$ & \backslashbox{ 2 }{ 35 }&\backslashbox{ 1 }{ 39 }&\backslashbox{ 1 }{ 27 }&\backslashbox{ 0 }{ 33 }&\backslashbox{ 1 }{ 50 }&\backslashbox{ 0 }{ 40 }&\backslashbox{ 0 }{ 54 }&\backslashbox{ 0 }{ 63 }&\backslashbox{ 0 }{ 51 }&\backslashbox{ 0 }{ 50 }&\backslashbox{ 1 }{ 52 }&\backslashbox{ 0 }{ 55 }&\backslashbox{ 0 }{ 29 }&\backslashbox{ 1 }{ 33 }&\backslashbox{ 0 }{ 33 }&\backslashbox{ 1 }{ 30 }&\backslashbox{ \color{red}{\textbf{985}} }{ \color{blue}{\textbf{282}} }&\backslashbox{ 7 }{ 44 }\\\hline 
$\textrm{GW}_3$ & \backslashbox{ 0 }{ 38 }&\backslashbox{ 0 }{ 42 }&\backslashbox{ 0 }{ 32 }&\backslashbox{ 0 }{ 36 }&\backslashbox{ 0 }{ 46 }&\backslashbox{ 0 }{ 49 }&\backslashbox{ 0 }{ 59 }&\backslashbox{ 0 }{ 58 }&\backslashbox{ 0 }{ 60 }&\backslashbox{ 0 }{ 47 }&\backslashbox{ 0 }{ 45 }&\backslashbox{ 0 }{ 64 }&\backslashbox{ 0 }{ 28 }&\backslashbox{ 0 }{ 42 }&\backslashbox{ 0 }{ 34 }&\backslashbox{ 0 }{ 30 }&\backslashbox{ \color{red}{\textbf{1000}} }{ \color{blue}{\textbf{273}} }&\backslashbox{ 0 }{ 17 }\\\hline 
\end{tabular}}
\end{center}

\begin{center}
\captionof*{table}{\fontsize{8pt}{8pt}\selectfont Portfolio Optimization}
\resizebox{\textwidth}{!}{%
\begin{tabular}{|l||*{18}{c|}}\hline 
 \backslashbox{$\alpha$}{$\mathcal{P}$}&0&1&2&3&4&5&6&7&8&9&10&11&12&13&14&15&16&None
\\\hline\hline
$\textrm{BM}_2$ & \backslashbox{ 51 }{ 66 }&\backslashbox{ 66 }{ 54 }&\backslashbox{ 48 }{ 45 }&\backslashbox{ 49 }{ 44 }&\backslashbox{ 55 }{ 58 }&\backslashbox{ 56 }{ 61 }&\backslashbox{ 56 }{ 45 }&\backslashbox{ 49 }{ 65 }&\backslashbox{ 61 }{ 58 }&\backslashbox{ 62 }{ 36 }&\backslashbox{ 60 }{ 49 }&\backslashbox{ 66 }{ 68 }&\backslashbox{ 51 }{ 38 }&\backslashbox{ 49 }{ 42 }&\backslashbox{ 59 }{ 61 }&\backslashbox{ 53 }{ 56 }&\backslashbox{ \color{red}{\textbf{76}} }{ \color{blue}{\textbf{104}} }&\backslashbox{ 33 }{ 50 }\\\hline 
$\textrm{BM}_3$ & \backslashbox{ 50 }{ 59 }&\backslashbox{ 48 }{ 49 }&\backslashbox{ 60 }{ 55 }&\backslashbox{ 61 }{ 50 }&\backslashbox{ 41 }{ 60 }&\backslashbox{ 41 }{ 58 }&\backslashbox{ 57 }{ 57 }&\backslashbox{ 45 }{ 49 }&\backslashbox{ 47 }{ 63 }&\backslashbox{ 64 }{ 44 }&\backslashbox{ 49 }{ 62 }&\backslashbox{ 33 }{ 64 }&\backslashbox{ 52 }{ 46 }&\backslashbox{ 53 }{ 43 }&\backslashbox{ 67 }{ 52 }&\backslashbox{ 62 }{ 62 }&\backslashbox{ 61 }{ \color{blue}{\textbf{111}} }&\backslashbox{ \color{red}{\textbf{109}} }{ 16 }\\\hline 
$\textrm{GW}_2$ & \backslashbox{ 88 }{ 91 }&\backslashbox{ 88 }{ 88 }&\backslashbox{ 89 }{ 90 }&\backslashbox{ 90 }{ 94 }&\backslashbox{ 86 }{ 84 }&\backslashbox{ 79 }{ 88 }&\backslashbox{ 90 }{ 96 }&\backslashbox{ 91 }{ 90 }&\backslashbox{ 86 }{ 86 }&\backslashbox{ 87 }{ 92 }&\backslashbox{ 84 }{ 79 }&\backslashbox{ 72 }{ 71 }&\backslashbox{ 80 }{ 79 }&\backslashbox{ 101 }{ 98 }&\backslashbox{ 79 }{ 72 }&\backslashbox{ 82 }{ 78 }&\backslashbox{ 110 }{ 106 }&\backslashbox{ 6 }{ 22 }\\\hline 
$\textrm{GW}_3$ & \backslashbox{ 48 }{ 52 }&\backslashbox{ 49 }{ 48 }&\backslashbox{ 61 }{ 69 }&\backslashbox{ 61 }{ 60 }&\backslashbox{ 66 }{ 67 }&\backslashbox{ 56 }{ 61 }&\backslashbox{ 53 }{ 58 }&\backslashbox{ 66 }{ 63 }&\backslashbox{ 51 }{ 47 }&\backslashbox{ 57 }{ 68 }&\backslashbox{ 62 }{ 62 }&\backslashbox{ 41 }{ 44 }&\backslashbox{ 63 }{ 64 }&\backslashbox{ 65 }{ 57 }&\backslashbox{ 44 }{ 46 }&\backslashbox{ 55 }{ 56 }&\backslashbox{ \color{red}{\textbf{114}} }{ \color{blue}{\textbf{122}} }&\backslashbox{ 0 }{ 9 }\\\hline 
\end{tabular}}
\end{center}

\begin{center}
\captionof*{table}{\fontsize{8pt}{8pt}\selectfont MIS-GNP}
\resizebox{\textwidth}{!}{%
\begin{tabular}{|l||*{18}{c|}}\hline 
 \backslashbox{$\alpha$}{$\mathcal{P}$}&0&1&2&3&4&5&6&7&8&9&10&11&12&13&14&15&16&None
\\\hline\hline
$\textrm{BM}_2$ & \backslashbox{ 42 }{ 60 }&\backslashbox{ 41 }{ 47 }&\backslashbox{ 48 }{ 61 }&\backslashbox{ 51 }{ 46 }&\backslashbox{ 40 }{ 58 }&\backslashbox{ 49 }{ 56 }&\backslashbox{ 40 }{ 42 }&\backslashbox{ 45 }{ 59 }&\backslashbox{ 57 }{ 64 }&\backslashbox{ 38 }{ 52 }&\backslashbox{ 39 }{ 55 }&\backslashbox{ 46 }{ 45 }&\backslashbox{ 57 }{ 57 }&\backslashbox{ 56 }{ 58 }&\backslashbox{ 37 }{ 50 }&\backslashbox{ 56 }{ 59 }&\backslashbox{ \color{red}{\textbf{222}} }{ 68 }&\backslashbox{ 36 }{ 63 }\\\hline 
$\textrm{BM}_3$ & \backslashbox{ 42 }{ 52 }&\backslashbox{ 36 }{ 51 }&\backslashbox{ 39 }{ 55 }&\backslashbox{ 31 }{ 53 }&\backslashbox{ 45 }{ 56 }&\backslashbox{ 50 }{ 56 }&\backslashbox{ 33 }{ 58 }&\backslashbox{ 21 }{ 45 }&\backslashbox{ 43 }{ 64 }&\backslashbox{ 36 }{ 62 }&\backslashbox{ 34 }{ 63 }&\backslashbox{ 45 }{ 57 }&\backslashbox{ 41 }{ 62 }&\backslashbox{ 35 }{ 59 }&\backslashbox{ 35 }{ 55 }&\backslashbox{ 23 }{ 45 }&\backslashbox{ \color{red}{\textbf{321}} }{ \color{blue}{\textbf{78}} }&\backslashbox{ 90 }{ 29 }\\\hline 
$\textrm{GW}_2$ & \backslashbox{ 42 }{ 45 }&\backslashbox{ 65 }{ 59 }&\backslashbox{ 68 }{ 59 }&\backslashbox{ 34 }{ 55 }&\backslashbox{ 61 }{ 56 }&\backslashbox{ 51 }{ 54 }&\backslashbox{ 58 }{ 60 }&\backslashbox{ 65 }{ 66 }&\backslashbox{ 51 }{ 59 }&\backslashbox{ 59 }{ 54 }&\backslashbox{ 46 }{ 62 }&\backslashbox{ 49 }{ 63 }&\backslashbox{ 33 }{ 47 }&\backslashbox{ 47 }{ 55 }&\backslashbox{ 44 }{ 52 }&\backslashbox{ 59 }{ 57 }&\backslashbox{ \color{red}{\textbf{142}} }{ 65 }&\backslashbox{ 26 }{ 32 }\\\hline 
$\textrm{GW}_3$ & \backslashbox{ 46 }{ 58 }&\backslashbox{ 63 }{ 70 }&\backslashbox{ 58 }{ 58 }&\backslashbox{ 36 }{ 55 }&\backslashbox{ 54 }{ 67 }&\backslashbox{ 53 }{ 52 }&\backslashbox{ 51 }{ 63 }&\backslashbox{ 60 }{ 63 }&\backslashbox{ 52 }{ 60 }&\backslashbox{ 57 }{ 55 }&\backslashbox{ 50 }{ 53 }&\backslashbox{ 47 }{ 61 }&\backslashbox{ 44 }{ 51 }&\backslashbox{ 50 }{ 52 }&\backslashbox{ 38 }{ 45 }&\backslashbox{ 55 }{ 62 }&\backslashbox{ \color{red}{\textbf{185}} }{ 71 }&\backslashbox{ 1 }{ 4 }\\\hline 
\end{tabular}}
\end{center}

\begin{center}
\captionof*{table}{\fontsize{8pt}{8pt}\selectfont MIS-NWS}
\resizebox{\textwidth}{!}{%
\begin{tabular}{|l||*{18}{c|}}\hline 
 \backslashbox{$\alpha$}{$\mathcal{P}$}&0&1&2&3&4&5&6&7&8&9&10&11&12&13&14&15&16&None
\\\hline\hline
$\textrm{BM}_2$ & \backslashbox{ 60 }{ 69 }&\backslashbox{ 48 }{ 54 }&\backslashbox{ 59 }{ 53 }&\backslashbox{ 54 }{ 63 }&\backslashbox{ 43 }{ 52 }&\backslashbox{ 61 }{ 53 }&\backslashbox{ 53 }{ 45 }&\backslashbox{ 51 }{ 41 }&\backslashbox{ 48 }{ 54 }&\backslashbox{ 65 }{ 52 }&\backslashbox{ 51 }{ 68 }&\backslashbox{ 54 }{ 52 }&\backslashbox{ 56 }{ 45 }&\backslashbox{ 43 }{ 41 }&\backslashbox{ 59 }{ 61 }&\backslashbox{ 45 }{ 71 }&\backslashbox{ \color{red}{\textbf{116}} }{ 69 }&\backslashbox{ 34 }{ 57 }\\\hline 
$\textrm{BM}_3$ & \backslashbox{ 47 }{ 45 }&\backslashbox{ 46 }{ 65 }&\backslashbox{ 48 }{ 64 }&\backslashbox{ 51 }{ 42 }&\backslashbox{ 51 }{ 73 }&\backslashbox{ 53 }{ 76 }&\backslashbox{ 50 }{ 56 }&\backslashbox{ 41 }{ 64 }&\backslashbox{ 58 }{ 60 }&\backslashbox{ 42 }{ 52 }&\backslashbox{ 60 }{ 59 }&\backslashbox{ 50 }{ 49 }&\backslashbox{ 67 }{ 55 }&\backslashbox{ 49 }{ 52 }&\backslashbox{ 37 }{ 54 }&\backslashbox{ 31 }{ 50 }&\backslashbox{ \color{red}{\textbf{121}} }{ 55 }&\backslashbox{ 98 }{ 29 }\\\hline 
$\textrm{GW}_2$ & \backslashbox{ 68 }{ 62 }&\backslashbox{ 64 }{ 56 }&\backslashbox{ 58 }{ 57 }&\backslashbox{ 66 }{ 63 }&\backslashbox{ 51 }{ 40 }&\backslashbox{ 67 }{ 64 }&\backslashbox{ 61 }{ 72 }&\backslashbox{ 58 }{ 76 }&\backslashbox{ 57 }{ 68 }&\backslashbox{ 55 }{ 55 }&\backslashbox{ 54 }{ 51 }&\backslashbox{ 51 }{ 53 }&\backslashbox{ 45 }{ 52 }&\backslashbox{ 52 }{ 58 }&\backslashbox{ 57 }{ 53 }&\backslashbox{ 61 }{ 66 }&\backslashbox{ 47 }{ 32 }&\backslashbox{ 28 }{ 22 }\\\hline 
$\textrm{GW}_3$ & \backslashbox{ 64 }{ 60 }&\backslashbox{ 62 }{ 59 }&\backslashbox{ 60 }{ 65 }&\backslashbox{ 65 }{ 67 }&\backslashbox{ 53 }{ 53 }&\backslashbox{ 51 }{ 59 }&\backslashbox{ 59 }{ 69 }&\backslashbox{ 62 }{ 76 }&\backslashbox{ 56 }{ 55 }&\backslashbox{ 66 }{ 56 }&\backslashbox{ 59 }{ 60 }&\backslashbox{ 60 }{ 59 }&\backslashbox{ 46 }{ 54 }&\backslashbox{ 51 }{ 59 }&\backslashbox{ 64 }{ 57 }&\backslashbox{ 73 }{ 61 }&\backslashbox{ 45 }{ 26 }&\backslashbox{ 4 }{ 5 }\\\hline 
\end{tabular}}
\end{center}

\caption{Each cell contains how often each vertex-at-top rotation choice maximizes $\alpha$ and $\mathcal{P}$ for each warmstart. Problem Dataset is the $1000$ problem instances for each problem type (see Section~\ref{setup}). Outliers for $\alpha$ ($\mathcal{P}$) are values which are $2$ standard deviations above the mean, and are indicated with {\color{red}\textbf{red}} ({\color{blue}\textbf{blue}}).}
\label{HistData}
\end{figure*}

\begin{figure*}

\begin{center}
\captionof*{table}{\fontsize{8pt}{8pt}\selectfont Continous}
\vspace{-0.3cm}

\resizebox{0.6\textwidth}{!}{%
\begin{tabular}{|c||c|c|c|}\hline 
 \backslashbox{$\alpha$}{$\mathcal{P}$}&First&Last&None
\\\hline\hline
$\textrm{BM}_2$ & \backslashbox{$0.6958\pm0.0979$}{$0.0033\pm 0.0140$}&\backslashbox{$0.7817\pm0.0132$}{$0.0045\pm 0.0132$}&\backslashbox{$0.6619\pm0.1090$}{$0.0018\pm 0.0096$}\\\hline
$\textrm{BM}_3$ & \backslashbox{$0.6677\pm0.1000$}{$0.0016\pm 0.0075$}&\backslashbox{$0.7616\pm0.0078$}{$0.0028\pm 0.0078$}&\backslashbox{$0.7497\pm0.0663$}{$0.0031\pm 0.0109$}\\\hline
$\textrm{GW}_2$ & \backslashbox{$0.8555\pm0.1102$}{$0.1465\pm 0.2030$}&\backslashbox{$0.9315\pm0.2107$}{$0.1938\pm 0.2107$}&\backslashbox{$0.7622\pm0.1351$}{$0.0661\pm 0.1315$}\\\hline
$\textrm{GW}_3$ & \backslashbox{$0.8476\pm0.1121$}{$0.1361\pm 0.1954$}&\backslashbox{$0.9273\pm0.2037$}{$0.1809\pm 0.2037$}&\backslashbox{$0.6823\pm0.1299$}{$0.0268\pm 0.0671$}\\\hline\end{tabular}}
\end{center}
\begin{center}
\captionof*{table}{\fontsize{8pt}{8pt}\selectfont Discrete}
\resizebox{0.6\textwidth}{!}{%
\begin{tabular}{|c||c|c|c|}\hline 
 \backslashbox{$\alpha$}{$\mathcal{P}$}&First&Last&None
\\\hline\hline
$\textrm{BM}_2$ & \backslashbox{$0.6965\pm0.0990$}{$0.0035\pm 0.0151$}&\backslashbox{$0.7751\pm0.0175$}{$0.0058\pm 0.0175$}&\backslashbox{$0.6693\pm0.1077$}{$0.0029\pm 0.0127$}\\\hline
$\textrm{BM}_3$ & \backslashbox{$0.6678\pm0.1021$}{$0.0021\pm 0.0081$}&\backslashbox{$0.7556\pm0.0125$}{$0.0037\pm 0.0125$}&\backslashbox{$0.7449\pm0.0631$}{$0.0041\pm 0.0153$}\\\hline
$\textrm{GW}_2$ & \backslashbox{$0.8501\pm0.1176$}{$0.1668\pm 0.2169$}&\backslashbox{$0.9311\pm0.2182$}{$0.2238\pm 0.2182$}&\backslashbox{$0.7585\pm0.1359$}{$0.0676\pm 0.1203$}\\\hline
$\textrm{GW}_3$ & \backslashbox{$0.8437\pm0.1178$}{$0.1562\pm 0.2114$}&\backslashbox{$0.9276\pm0.2139$}{$0.2120\pm 0.2139$}&\backslashbox{$0.6687\pm0.1281$}{$0.0291\pm 0.0738$}\\\hline\end{tabular}}
\end{center}
\begin{center}
\captionof*{table}{\fontsize{8pt}{8pt}\selectfont TSP}
\resizebox{0.6\textwidth}{!}{%
\begin{tabular}{|c||c|c|c|}\hline 
 \backslashbox{$\alpha$}{$\mathcal{P}$}&First&Last&None
\\\hline\hline
$\textrm{BM}_2$ & \backslashbox{$0.8881\pm0.0534$}{$0.0008\pm 0.0066$}&\backslashbox{$0.9509\pm0.0090$}{$0.0012\pm 0.0090$}&\backslashbox{$0.8784\pm0.0540$}{$0.0011\pm 0.0118$}\\\hline
$\textrm{BM}_3$ & \backslashbox{$0.8748\pm0.0544$}{$0.0007\pm 0.0069$}&\backslashbox{$0.9471\pm0.0079$}{$0.0010\pm 0.0079$}&\backslashbox{$0.9262\pm0.0329$}{$0.0011\pm 0.0081$}\\\hline
$\textrm{GW}_2$ & \backslashbox{$0.8783\pm0.0225$}{$0.0037\pm 0.0045$}&\backslashbox{$0.9488\pm0.0020$}{$0.0010\pm 0.0020$}&\backslashbox{$0.8929\pm0.0409$}{$0.0026\pm 0.0037$}\\\hline
$\textrm{GW}_3$ & \backslashbox{$0.8769\pm0.0196$}{$0.0035\pm 0.0038$}&\backslashbox{$0.9488\pm0.0007$}{$0.0007\pm 0.0007$}&\backslashbox{$0.8540\pm0.0455$}{$0.0013\pm 0.0023$}\\\hline\end{tabular}}
\end{center}
\begin{center}
\captionof*{table}{\fontsize{8pt}{8pt}\selectfont Portfolio Optimization}
\resizebox{0.6\textwidth}{!}{%
\begin{tabular}{|c||c|c|c|}\hline 
 \backslashbox{$\alpha$}{$\mathcal{P}$}&First&Last&None
\\\hline\hline
$\textrm{BM}_2$ & \backslashbox{$0.9661\pm0.0063$}{$0.0009\pm 0.0057$}&\backslashbox{$0.9684\pm0.0075$}{$0.0016\pm 0.0075$}&\backslashbox{$0.9633\pm0.0064$}{$0.0006\pm 0.0034$}\\\hline
$\textrm{BM}_3$ & \backslashbox{$0.9587\pm0.0082$}{$0.0003\pm 0.0026$}&\backslashbox{$0.9614\pm0.0023$}{$0.0005\pm 0.0023$}&\backslashbox{$0.9656\pm0.0060$}{$0.0005\pm 0.0041$}\\\hline
$\textrm{GW}_2$ & \backslashbox{$0.9987\pm0.0052$}{$0.9099\pm 0.2421$}&\backslashbox{$0.9993\pm0.2266$}{$0.9215\pm 0.2266$}&\backslashbox{$0.9645\pm0.0241$}{$0.2405\pm 0.3278$}\\\hline
$\textrm{GW}_3$ & \backslashbox{$0.9977\pm0.0067$}{$0.8551\pm 0.2858$}&\backslashbox{$0.9987\pm0.2665$}{$0.8748\pm 0.2665$}&\backslashbox{$0.9550\pm0.0201$}{$0.1060\pm 0.2090$}\\\hline\end{tabular}}
\end{center}
\begin{center}
\captionof*{table}{\fontsize{8pt}{8pt}\selectfont MIS-GNP}
\resizebox{0.6\textwidth}{!}{%
\begin{tabular}{|c||c|c|c|}\hline 
 \backslashbox{$\alpha$}{$\mathcal{P}$}&First&Last&None
\\\hline\hline
$\textrm{BM}_2$ & \backslashbox{$0.8196\pm0.0600$}{$0.0091\pm 0.0260$}&\backslashbox{$0.8698\pm0.0258$}{$0.0125\pm 0.0258$}&\backslashbox{$0.8063\pm0.0625$}{$0.0066\pm 0.0178$}\\\hline
$\textrm{BM}_3$ & \backslashbox{$0.7993\pm0.0625$}{$0.0047\pm 0.0124$}&\backslashbox{$0.8582\pm0.0157$}{$0.0078\pm 0.0157$}&\backslashbox{$0.8455\pm0.0465$}{$0.0087\pm 0.0184$}\\\hline
$\textrm{GW}_2$ & \backslashbox{$0.8850\pm0.0582$}{$0.0996\pm 0.1462$}&\backslashbox{$0.9222\pm0.1458$}{$0.1134\pm 0.1458$}&\backslashbox{$0.8590\pm0.0628$}{$0.0577\pm 0.0907$}\\\hline
$\textrm{GW}_3$ & \backslashbox{$0.8734\pm0.0612$}{$0.0874\pm 0.1424$}&\backslashbox{$0.9151\pm0.1406$}{$0.0980\pm 0.1406$}&\backslashbox{$0.8021\pm0.0663$}{$0.0236\pm 0.0455$}\\\hline\end{tabular}}
\end{center}
\begin{center}
\captionof*{table}{\fontsize{8pt}{8pt}\selectfont MIS-NWS}
\resizebox{0.6\textwidth}{!}{%
\begin{tabular}{|c||c|c|c|}\hline 
 \backslashbox{$\alpha$}{$\mathcal{P}$}&First&Last&None
\\\hline\hline
$\textrm{BM}_2$ & \backslashbox{$0.8078\pm0.0472$}{$0.0076\pm 0.0190$}&\backslashbox{$0.8218\pm0.0213$}{$0.0080\pm 0.0213$}&\backslashbox{$0.7902\pm0.0474$}{$0.0050\pm 0.0138$}\\\hline
$\textrm{BM}_3$ & \backslashbox{$0.7802\pm0.0525$}{$0.0049\pm 0.0145$}&\backslashbox{$0.8043\pm0.0163$}{$0.0058\pm 0.0163$}&\backslashbox{$0.8163\pm0.0426$}{$0.0085\pm 0.0203$}\\\hline
$\textrm{GW}_2$ & \backslashbox{$0.8851\pm0.0475$}{$0.1139\pm 0.1429$}&\backslashbox{$0.8672\pm0.1057$}{$0.0699\pm 0.1057$}&\backslashbox{$0.8493\pm0.0568$}{$0.0598\pm 0.0904$}\\\hline
$\textrm{GW}_3$ & \backslashbox{$0.8728\pm0.0488$}{$0.0954\pm 0.1326$}&\backslashbox{$0.8528\pm0.0873$}{$0.0511\pm 0.0873$}&\backslashbox{$0.7899\pm0.0630$}{$0.0225\pm 0.0465$}\\\hline\end{tabular}}
\end{center}
\caption{Each cell contains the mean and standard deviation of $\alpha$ and $\mathcal{P}$ for each warmstart/vertex-at-top rotation choice. Problem Dataset is the $1000$ problem instances for each problem type (see Section~\ref{setup}).}
\label{CompData}
\end{figure*}

\begin{figure*}

    \vspace{-.5cm}
    \foreach \i in {1,...,6} {
        \includegraphics[width=.85\linewidth,page=\i]{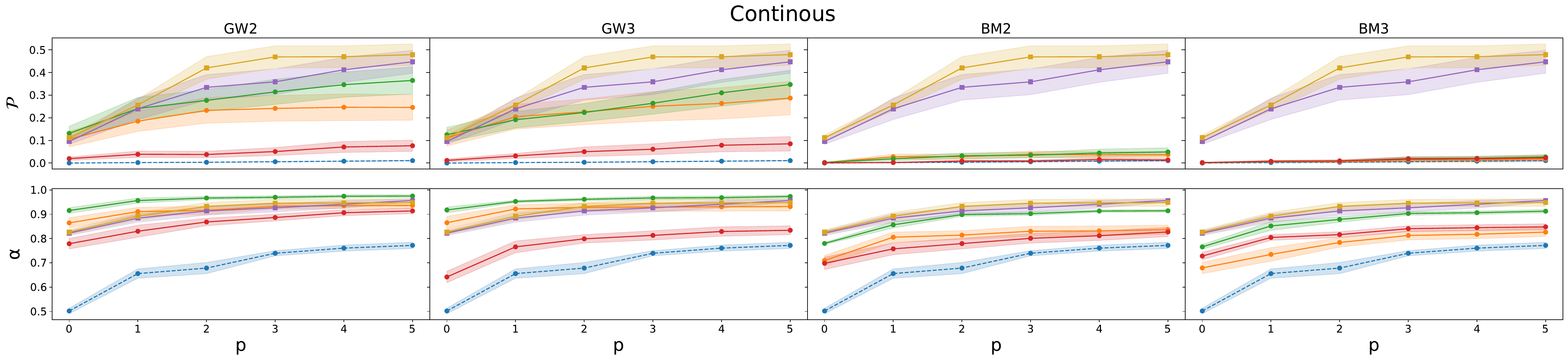}
    }
    \vspace{-.3cm}
    \caption{$(\alpha,\mathcal{P})$ data for $\textrm{GW}_{2}$, $\textrm{GW}_{3}$, $\textrm{BM}_{2}$, and $\textrm{BM}_{3}$ over $p$ for the $10$ continuous random QUBO, discrete random QUBO, TSP, Portfolio Optimization, MIS-GNP, and MIS-NWS problem instances. Datapoints are average values and shaded regions are $\pm 0.25$ standard deviations.}
    \label{Full_Comparison_All}
\end{figure*}

\end{document}